\newcommand{\tabincell}[2]{\begin{tabular}{@{}#1@{}}#2\end{tabular}}
  \providecommand\BibTeX{{%
    \normalfont B\kern-0.5em{\scshape i\kern-0.25em b}\kern-0.8em\TeX}}}
\begin{document}

%%
%% The "title" command has an optional parameter,
%% allowing the author to define a "short title" to be used in page headers.

%%
%% The "author" command and its associated commands are used to define
%% the authors and their affiliations.
%% Of note is the shared affiliation of the first two authors, and the
%% "authornote" and "authornotemark" commands
%% used to denote shared contribution to the research.

%%
%% By default, the full list of authors will be used in the page
%% headers. Often, this list is too long, and will overlap
%% other information printed in the page headers. This command allows
%% the author to define a more concise list
%% of authors' names for this purpose.

%%
%% The abstract is a short summary of the work to be presented in the
%% article.

%%
%% The code below is generated by the tool at http://dl.acm.org/ccs.cfm.
%% Please copy and paste the code instead of the example below.
%%

%
% These are are recommended to typeset listings but not required. See the subsubsection on listing. Remove this block if you don't have listings in your paper.

%
%\nocopyright
%
% PDF Info Is REQUIRED.
% For /Title, write your title in Mixed Case.
% Don't use accents or commands. Retain the parentheses.
% For /Author, add all authors within the parentheses,
% separated by commas. No accents, special characters
% or commands are allowed.
% Keep the /TemplateVersion tag as is

\newtheorem{problem}{Problem}
\newtheorem{assumption}{Assumption}
\newenvironment{sproof}{%
  \renewcommand{\proofname}{Sketch of Proof}\proof}{\endproof}
 \author{Junxiang Wang}
\affiliation{%
  \institution{Emory University}
  \streetaddress{201 Dowman Drive}
  \city{Atlanta}
  \state{Georgia}
  \country{United States}
  \postcode{30322}
}
\email{jwan936@emory.edu}

\author{Junji Jiang}
\affiliation{%
  \institution{Tianjin University}
  \streetaddress{92 Weijin Road}
  \city{Tianjin}
  \country{China}
  \postcode{22030}
}
\email{anjou_j@tju.edu.cn}

\author{Liang Zhao}
\affiliation{%
  \institution{Emory University}
  \streetaddress{201 Dowman Drive}
  \city{Atlanta}
  \state{Georgia}
  \country{United States}
  \postcode{22030}
}
\email{lzhao41@emory.edu}
%%
%% \BibTeX command to typeset BibTeX logo in the docs

%% Rights management information.  This information is sent to you
%% when you complete the rights form.  These commands have SAMPLE
%% values in them; it is your responsibility as an author to replace
%% the commands and values with those provided to you when you
%% complete the rights form.

%% These commands are for a PROCEEDINGS abstract or paper.

%%
%% Submission ID.
%% Use this when submitting an article to a sponsored event. You'll
%% receive a unique submission ID from the organizers
%% of the event, and this ID should be used as the parameter to this command.
%%\acmSubmissionID{123-A56-BU3}

%%
%% The majority of ACM publications use numbered citations and
%% references.  The command \citestyle{authoryear} switches to the
%% "author year" style.
%%
%% If you are preparing content for an event
%% sponsored by ACM SIGGRAPH, you must use the "author year" style of
%% citations and references.
%% Uncommenting
%% the next command will enable that style.
%%\citestyle{acmauthoryear}

%%
%% end of the preamble, start of the body of the document source.

%%
%% The "title" command has an optional parameter,
%% allowing the author to define a "short title" to be used in page headers.
\title{An Invertible Graph Diffusion Neural Network for Source Localization}

%%
%% Keywords. The author(s) should pick words that accurately describe
%% the work being presented. Separate the keywords with commas.
\keywords{Graph Diffusion, Source Localization, Inverse Problem, Error Compensation, Unrolled Optimization}
\begin{abstract}
Localizing the source of graph diffusion phenomena, such as misinformation propagation, is an important yet extremely challenging task in the real world. Existing source localization models typically are heavily dependent on the hand-crafted rules and only tailored for certain domain-specific applications. Unfortunately, a large portion of the graph diffusion process for many applications is still unknown to human beings so it is important to have expressive models for learning such underlying rules automatically. Recently, there is a surge of research body on expressive models such as Graph Neural Networks (GNNs) for automatically learning the underlying graph diffusion. However, source localization is instead the inverse of graph diffusion, which is a typical inverse problem in graphs that is well-known to be ill-posed because there can be multiple solutions and hence different from the traditional (semi-)supervised learning settings. This paper aims to establish a generic framework of invertible graph diffusion models for source localization on graphs, namely Invertible Validity-aware Graph Diffusion (IVGD), to handle major challenges including 1) Difficulty to leverage knowledge in graph diffusion models for modeling their inverse processes in an end-to-end fashion, 2) Difficulty to ensure the validity of the inferred sources, and 3) Efficiency and scalability in source inference. Specifically, first, to inversely infer sources of graph diffusion, we propose a graph residual scenario to make existing graph diffusion models invertible with theoretical guarantees; second, we develop a novel error compensation mechanism that learns to offset the errors of the inferred sources. Finally, to ensure the validity of the inferred sources, a new set of validity-aware layers have been devised to project inferred sources to feasible regions by flexibly encoding constraints with unrolled optimization techniques. A linearization technique is proposed to strengthen the efficiency of our proposed layers. The convergence of the proposed IVGD is proven theoretically. Extensive experiments on nine real-world datasets demonstrate that our proposed IVGD outperforms state-of-the-art comparison methods significantly. We have released our code at \url{ https://github.com/xianggebenben/IVGD}.
\end{abstract}
\begin{CCSXML}
<ccs2012>
<concept>
<concept_id>10002951.10003260.10003282.10003292</concept_id>
<concept_desc>Information systems~Social networks</concept_desc>
<concept_significance>500</concept_significance>
</concept>
</ccs2012>
\end{CCSXML}
\ccsdesc[500]{Information systems~Social networks}
\maketitle
\copyrightyear{2022}
\acmYear{2022}
\setcopyright{acmcopyright}\acmConference[WWW '22]{Proceedings of the ACM Web
Conference 2022}{April 25--29, 2022}{Virtual Event, Lyon, France}
\acmBooktitle{Proceedings of the ACM Web Conference 2022 (WWW '22), April
25--29, 2022, Virtual Event, Lyon, France}
\acmPrice{15.00}
\acmDOI{10.1145/3485447.3512155}
\acmISBN{978-1-4503-9096-5/22/04}
%%
%% By default, the full list of authors will be used in the page
%% headers. Often, this list is too long, and will overlap
%% other information printed in the page headers. This command allows
%% the author to define a more concise list
%% of authors' names for this purpose.
%%
%% The abstract is a short summary of the work to be presented in the
%% article.
\section{Introduction}
\indent Graphs are prevalent data structures where nodes are connected by their relations. They have been widely applied in various domains such as social networks \cite{scott1988social}, biological networks \cite{junker2011analysis}, and information networks \cite{harvey2006capacity}. As a fundamental task in graph mining, graph diffusion aims to predict future graph cascade patterns given source nodes. However, its inverse problem, graph source localization, is rarely explored and yet is an extremely important topic. It aims to detect source nodes given their future graph cascade patterns. As an example shown in Figure \ref{fig:illustration}, the goal of graph diffusion is to predict the cascade pattern $\{b,c,d,e\}$ given a source node $b$; while the goal of graph source localization is to detect source nodes $b$ or $c$ given the cascade pattern $\{b,c,d,e\}$. Graph source localization covers a wide range of promising research and real-world applications. For example, misinformation such as ``drinking bleach or alcohol can prevent or kill the virus" \cite{islam2020covid} in social networks is required to detect as early as possible, in order to prevent it from spreading;  Email is a primary vehicle to transmit computer viruses, and thus tracking the source Emails carrying viruses in the Email networks is integral to computer security \cite{newman2002email}; malware detection aims to position the source of malware in the Internet of Things (IoT) network \cite{jang2014mal}.  Therefore, the graph source localization problem entails attention and extensive investigations from machine learning researchers.\\
\indent The forward process in Figure \ref{fig:illustration}, namely, graph diffusion, has been studied for a long time, by traditional prescribed methods based on hand-crafted rules and heuristics such as SEHP \cite{bao2015modeling}, OSLOR \cite{cui2013cascading}, and DSHP \cite{ding2015video}. Following similar styles of traditional graph diffusion methods, classical methods for its inverse process, namely source localization of graph diffusion, have also been dominated by prescribed approaches. Specifically, a majority of methods are based on predefined rules, by utilizing either heuristics or metrics to select sources such as distance errors. Some other prescribed methods partition nodes into different clusters based on network topologies, and select source nodes in each cluster. These prescribed methods rely heavily on human predefined heuristics and rules and usually are specialized for specific applications. Therefore, they may not be suitable for applications where prior knowledge on diffusion mechanisms is unavailable.  Recently, with the development of  GNNs \cite{GNNBook2022}, Dong et al. utilized state-of-the-art architectures such as Graph Convolutional Network (GCN) to localize the source of misinformation \cite{dong2019multiple}. However, their method requires results from prescribed methods as its input, and hence still suffers from the drawback of prescribed methods mentioned above.\\
\begin{figure}
    \centering
    \includegraphics[width=\linewidth]{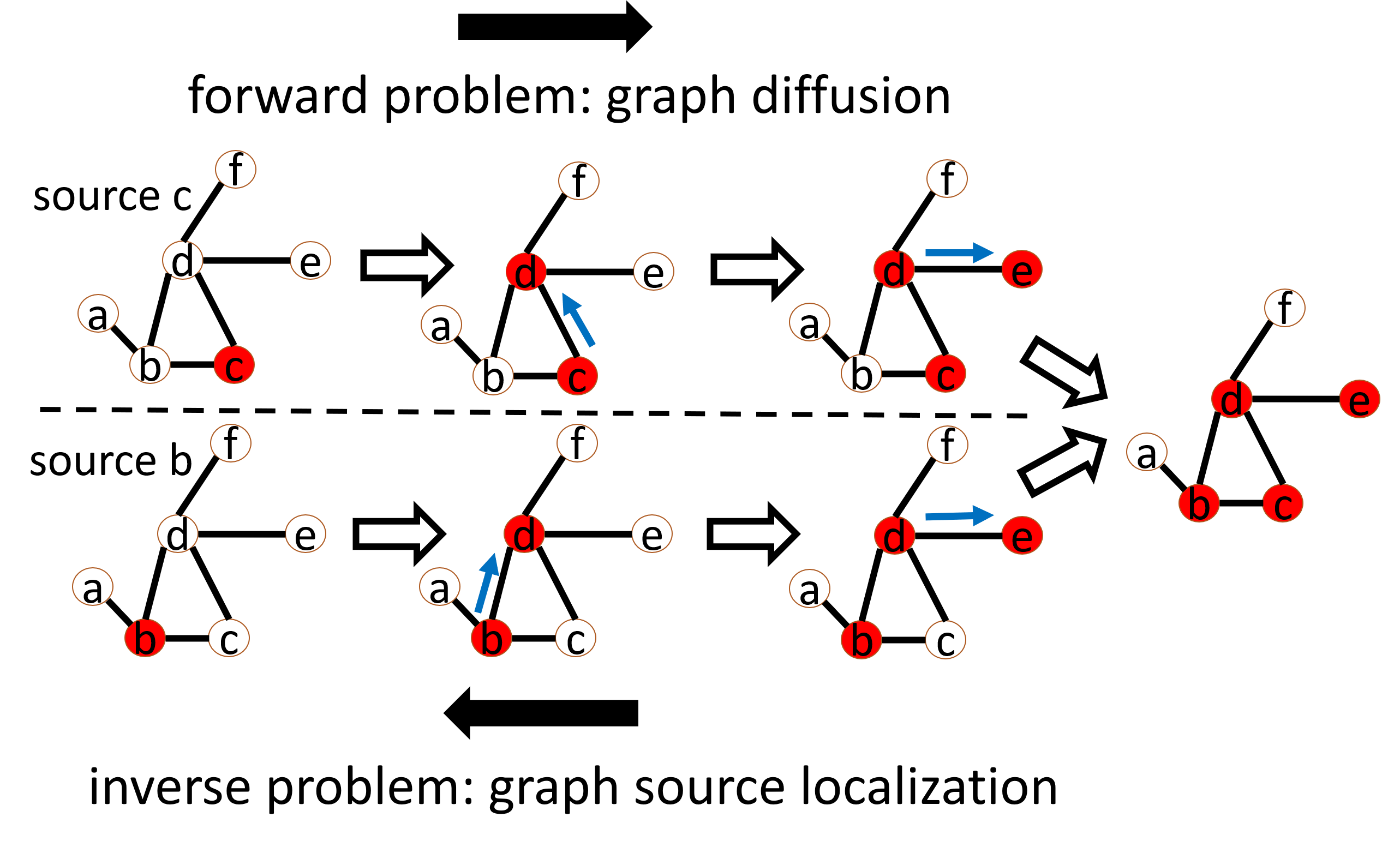}
    \vspace{-0.3cm}
    \caption{An example of information diffusion: different source nodes generate the same cascade pattern.}
    \label{fig:illustration}
    \vspace{-0.5cm}
\end{figure}
\indent In recent years, the advancement of GNNs leads to state-of-the-art performance in many graph mining tasks such as node classification and link prediction. They can incorporate node attributes into models and learn node representations effectively by capturing network topology and neighboring information  \cite{kipf2016semi}. They have recently expanded their success into graph diffusion problems \cite{cao2020popularity}, by tackling the drawbacks of traditional prescribed methods in graph diffusion. specifically, instead of requiring prior knowledge and rules of diffusion, GNNs based methods can "learn"  rules from the data in an end-to-end fashion. Although GNNs have been well applied for performing graph diffusion tasks, however, it is difficult to devise their inverse counterparts (i.e. graph source localization models) because such an inverse problem is much more difficult and involves three key challenges:   \textbf{1). Difficulty to leverage knowledge in graph diffusion models for modeling their inverse processes in an end-to-end fashion.}  The learned knowledge from graph diffusion models facilitates source localization. For example, as shown in Figure \ref{fig:illustration}, while nodes $b$ and $c$ generate the same cascade pattern  $\{ b,c,d,e\}$,  the learned knowledge from graph diffusion models is useful to predict which node is likely to be the source. However, it is extremely challenging to incorporate such a notion into the inverse problem in an end-to-end manner, and it is prohibitively difficult to define hand-crafted ways to achieve it with graph diffusion models directly since they are opposite processes. \textbf{2). Difficulty to ensure the validity of the inferred sources.} Graph sources usually follow validate graph patterns. For example, in the application of misinformation detection, sources of misinformation should be connected in the social networks. As another example, sources of malware are dense in some restricted regions of the IoT networks. Such validity constraints are imposed both in the training and test phases, which should be achieved by delicately-designed activation layers. Traditional activation layers such as softmax are exerted on individual nodes. However, validity constraints require the projection of multiple sources by considering their topological connections. \textbf{3). Efficiency and scalability in source inference.} Inferring sources constrained by validity patterns is a combinatorial problem and hence is time-consuming. To multiply the difficulty, the inverse process of graph diffusion models should also be inferred. Therefore, devising a scalable and efficient algorithm is important yet challenging.\\
\indent  In this paper, we propose a novel Invertible Validity-aware Graph Diffusion (IVGD) to simultaneously tackle all these challenges. Specifically, given a graph diffusion model, we make it invertible by restricting its Lipschitz constant for the residual GNNs, and thus an approximate estimation of source localization can be obtained by its inversion, and then a compensation module is presented to reduce the introduced errors with skip connection. Moreover, we leverage the unrolled optimization technique to integrate validity constraints into the model, where each layer is encoded by a constrained optimization problem. To combat efficiency and scalability problems, a linearization technique is used to transform problems into solvable ones, which can be efficiently solved by closed-form solutions. Finally, the convergence of the proposed IVGD to a feasible solution is proven theoretically. Our contributions in this work can be summarized as follows:
\begin{itemize}
    \item \textbf{Design a generic end-to-end framework for source location.} We develop a framework for the inverse of graph diffusion models, and learn rules of graph diffusion models automatically. It does not require hand-crafted rules and can be used for source localization. Our framework is generic to any graph diffusion model, and the code has been released publicly.
    \item \textbf{Develop an invertible graph diffusion model with an error compensation mechanism.} We propose a new graph residual net with Lipschitz regularization to ensure the invertibility of graph diffusion models. Furthermore, we propose an error compensation mechanism to offset the errors inferred from the graph residual net. 
    \item \textbf{Propose an efficient validity-aware layer to maintain the validity of inferred sources.} Our model can ensure the validity of inferred sources by automatically learning validity-aware layers. We further accelerate the optimization of the proposed layers by leveraging a linearization technique. It transforms nonconvex problems into convex problems, which have closed-form solutions. Moreover, we provide the convergence guarantees of the proposed IVGD to a feasible solution.
    \item \textbf{Conduct extensive experiments on nine datasets.}  Extensive experiments on nine datasets have been conducted to demonstrate the effectiveness and robustness of our proposed IVGD. Our proposed IVGD outperforms all comparison methods significantly on five metrics, especially $20\%$ on F1-Score.
\end{itemize}
\section{Related Work}
\label{sec:related work}
\indent  In this work, we summarize existing works related to this paper, which are shown as follows:\\
\indent\textbf{Graph Diffusion:} Graph diffusion is the task of predicting the diffusion of information dissemination in networks. It has a wide range of real-world applications such as societal event prediction \cite{wang2018incomplete,zhao2021event,zhao2017spatial}, and adverse event detection in social media \cite{wang2018multi,wang2018semi}. A large number of research works have been conducted to improve the quality of predictions. Most existing works usually assume the topologies of networks and apply the classical probabilistic graphical models. For example, Ahmed et al. identified patterns of temporal evolution that are generalizable to distinct types of data \cite{ahmed2013peek}; Bandari et al. constructed a multi-dimensional feature space derived from properties of an article and evaluate the efficacy of these features to serve as predictors of online diffusion \cite{bao2015modeling}. More traditional methods can be found in various survey papers \cite{moniz2019review,gao2019taxonomy,zhou2021survey}. However, they are applicable to a specific type of neural network and are poorly generalizable. A recent line of research works use Recurrent Neural Networks (RNN) to predict the diffusion, and usually include multimodality such as text content and time series \cite{bielski2018understanding,chen2019npp,chu2018cease,xie2020multimodal,zhang2018user}. Various techniques have been applied including self-attention mechanism \cite{bielski2018understanding,chen2017attention,wang2017cascade}, knowledge base \cite{zhao2019neural}, multi-task learning \cite{chen2019information}, and stochastic processes \cite{cao2017deephawkes,du2016recurrent,liao2019popularity} . However, they cannot utilize network topology to enhance predictions. To handle this challenge, GNNs have been applied to predict either macro-level (i.e. global level) tasks \cite{cao2020popularity} or micro-level (i.e. node level) tasks \cite{he2020network,qiu2018deepinf,wang2017topological,xia2021DeepIS} combined with RNN, and a handful of works attempted to utilize other neural network architectures \cite{kefato2018cas2vec,sanjo2017recipe,wang2018factorization,yang2019neural}.\\
\indent\textbf{Graph Source Localization:} The goal of the graph source localization is to identify the source of a network based on observations such as the states of the nodes and a subset of timestamps at which the diffusion process reached the corresponding nodes \cite{ying2018diffusion}. Graph source localization has a wide range of applications such as disease localization, virus localization, and rumor detection. Several recent surveys on this topic are available \cite{shah2010detecting,jiang2016identifying,shelke2019source}. Similar to graph diffusion models, existing graph source localization papers usually require the assumptions of the diffusion, network topology, and observations. With the development of GNNs, Dong et al. proposed a Graph Convolutional Networks based Source Identification (GCNSI) model for multiple source localization \cite{dong2019multiple}. However, its model relies heavily on hand-crafted rules. Moreover, its performance suffers from the class imbalance problem, as shown in experiments.
\begin{table}[]
\tiny
    \centering
    \begin{tabular}{c|c}
    \hline\hline
         Notations&Descriptions\\
         \hline
         $V$& Node set \\
         \hline
         $E$& Edge set \\
         \hline
         $Y_t$& Diffusion vector at time $t$\\ \hline
         $x$ & The vector of source nodes \\ \hline
         $f_W$& The function of feature construction \\ \hline
         $g$& The function of label propagation\\ \hline
         $\Phi(x)=0$ & The equality constraint of a validity pattern\\ \hline
         $T$ & The length of diffusion \\ \hline\hline
    \end{tabular}
    \caption{Important notations and descriptions}
    \label{tab:notation}
    \vspace{-0.8cm}
\end{table}
\section{Problem Setup}
\indent In this section, the problem addressed by this research is formulated mathematically in the form of an inverse problem.
\subsection{Problem Formulation}
\label{sec:problem formulation}
\indent Important notations are outlined in Table \ref{tab:notation}. Consider a graph $G=(V,E)$, where $V=\{v_1,\cdots,v_n\}$ and $E$ are the node set and the edge set respectively, $\vert V\vert=n$ is the number of nodes. $Y_t\in \{0,1\}^{n}$ is a diffusion vector at time $t$. $Y_{t,i}=1$ means that node $i$ is diffused, while $Y_{t,i}=0$ means that node $i$ is not diffused.   $S$ is a set of source nodes. $x\in \{0,1\}^n$ is a vector of source nodes, $x_i=1$ if $v_i\in S$ and $x_i=0$ otherwise. The diffusion process begins at timestamp 0 and terminates at timestamp $T$. While there are many existing GNN-based graph diffusion models, a general GNN framework consists of two stages: feature construction and label propagation. In the feature construction, a neural network $f_W$ is learned to estimate the initial node diffusion vector $\zeta=f_{W}(x)$ based on input $x$, where $W$ is a set of  learnable weights in $f_W$. In the label propagation, a propagation function $g$ is designed to diffuse information to neighboring nodes: $Y_T=g(\zeta)$. Therefore, the graph diffusion model is $\theta=g(f_W(x))$, and its inverse problem, graph source localization, is to infer $x$ from $Y_{T}$. Moreover, a validity pattern can be imposed on sources in the form of the constraint $\Phi(x)=0$ such as the number of source nodes, and the connectivity among multiple sources. Then the graph source localization problem can be mathematically formulated as follows:
\begin{align}
    \theta^{-1}: Y_T \rightarrow x \quad s.t. \ \Phi(x)=0. \label{eq:source localization}
\end{align}
\subsection{Challenges}
\label{sec:challenge}
\indent It is extremely challenging to automatically learn the source localization model $\theta^{-1}$ and solve the problem in Equation \eqref{eq:source localization} given an arbitrarily complex forward model such as deep neural networks due to several key challenges: \textit{1). The difficulty to integrate information from $\theta$ into $\theta^{-1}$}. The complex graph diffusion model $\theta$ is typically not invertible directly, so it is challenging to transfer the knowledge from $\theta$ into its graph inverse problem. \textit{2). The difficulty to incorporate $\Phi(x)=0$ into $\theta^{-1}$.} $\Phi(x)=0$ considers topological connections of all nodes instead of an individual node, and it can express complex validity patterns because of its nonlinearity. So it is difficult to encode such validity information from all nodes to activation layers. \textit{3). Efficiency and scalability to solve Equation \eqref{eq:source localization}}. Solving Equation \eqref{eq:source localization} is a combinatorial problem because $Y_t$ and $x$ are discrete. So it is imperative to develop an algorithm to solve it efficiently, and to scale well on large-scale graphs (i.e. $n$ is very large).
\section{Proposed IVGD Framework}
\label{sec:proposed model}
\indent In this section, we propose a generic framework for graph source localization, namely Invertible Validity-aware Graph Diffusion (IVGD) to address these challenges simultaneously. The high-level overview of the proposed IVGD framework is highlighted in Figure \ref{fig:framework}. Specifically, our proposed IVGD consists of two components: in Figure \ref{fig:framework}(a), we propose an invertible graph residual net to address Challenge 1, where the approximate estimation of graph source localization can be obtained by inverting the graph residual net with the integration of the proposed error compensation module (Section \ref{sec:invertible graph residual net}); in Figure \ref{fig:framework}(b), a series of validity-aware layers are introduced to resolve Challenges 2 and 3, which encode validity constraints into problems with unrolled optimization techniques. With the introduction of the linearization technique, they can be solved efficiently with closed-form solutions (Section \ref{sec:validity layers}). We provide the convergence guarantees of our proposed IVGD to a feasible solution (Section \ref{sec:convergence}).

\subsection{The Invertible Graph Residual Net}
\label{sec:invertible graph residual net}
\indent  Our goal in this subsection is to obtain an approximate estimation of the source vector $x$ based on the learned knowledge from the graph diffusion model $\theta$. One intuitive idea is to invert the process of the forward model $\theta$. The key challenge here is that $\theta$ is not necessarily invertible, so the task is that how to devise an invertible architecture based on $\theta$. To address this, we propose a novel invertible graph residual net and provide theoretical guarantees to ensure invertibility. After an approximate estimation of the source vector $z$ is obtained by the proposed invertible graph residual net, a simple compensation module is introduced to reduce estimation errors, which is denoted as $x=C(z)$. Because $z$ is close to $x$, we utilize an MLP module $Q$ to measure the deviation of $z$ from $x$: $z^{'}=Q(z)$. A skip connection concatenates $z$ and $z^{'}$ to form the compensated prediction $z^{''}=z+z^{'}=z+Q(z)$. However, $z^{''}$ may be beyond range (i.e. smaller than 0 or larger than 1). In order to remove  such bias, a piecewise-linear function is utilized to truncate bias as follows: $ x=\min(\max(0,z^{''}),1)$.\\
\indent Now we aim to devise an invertible GNN-based architecture. While there are many classic invertible architectures such as i-Revnet and Glow \cite{jacobsen2018revnet,chang2018reversible,kingma2018glow}, their forms are quite complex and require extra components to ensure one-to-one mapping. i-ResNet, However, stands out among others because of its simplicity and outstanding performance, and it allows for the form-free design of layers \cite{behrmann2019invertible}. We extend the idea of the i-Resnet to the GNN by regularizing its Lipschitz coefficient. To achieve this, we first formulate the graph residual net of the general GNN framework. $\zeta =F_W(x)=(f_W(x)+x)/2$ and $Y_T=G(\zeta)=(g(\zeta)+\zeta)/2$ are graph residual blocks of feature construction and label propagation, respectively. $P(x)=G(F_W(x))$ denotes the graph residual net for graph diffusion, and $P^{-1}$ denotes its inverse  for graph source localization. Next, $P$ can be inversed to $P^{-1}$ by simply fixed point iterations. Algorithm \ref{algo:backward invertible graph residual net} demonstrates the inverse process of the graph residual net for source localization. Specifically, Line 1 and Line 5 are initializations of label propagation and feature construction, respectively. Lines 2-4 and Lines 6-8 are fixed-point iterations of label propagation and feature construction, respectively.
\begin{figure}
    \centering
    \includegraphics[width=\linewidth]{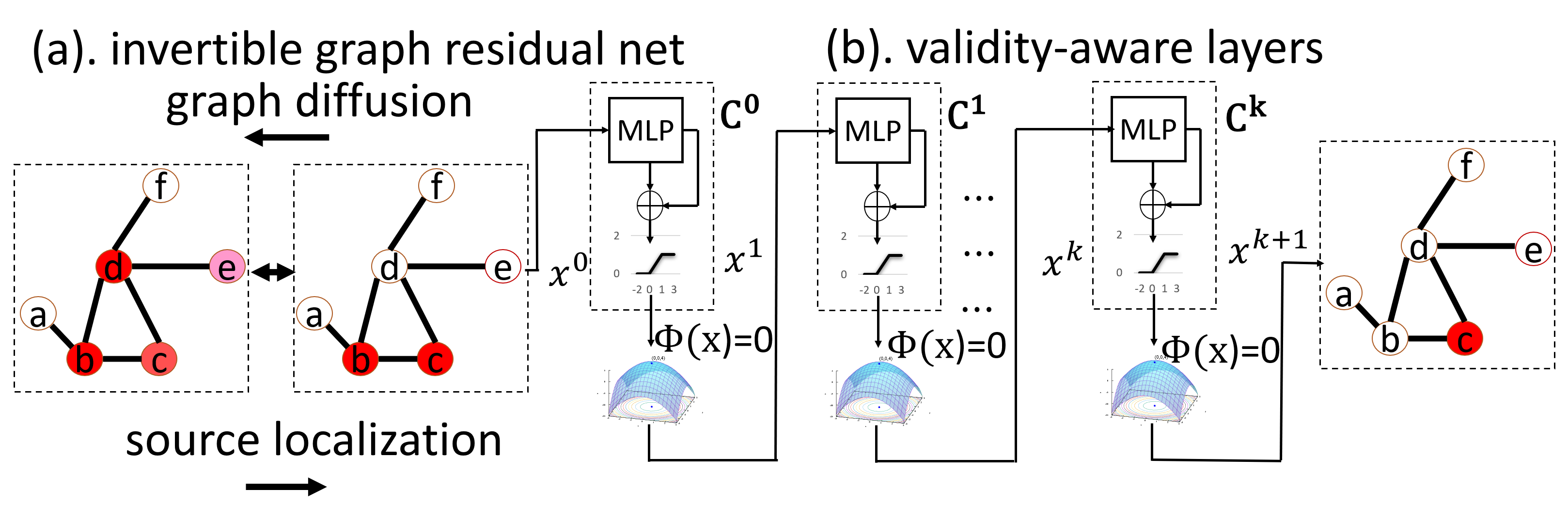}
    \caption{Framework overview: the proposed IVGD framework consists of an invertible graph diffusion model and a series of validity-aware layers.}
    \label{fig:framework}
    \vspace{-0.3cm}
\end{figure}

\begin{algorithm} %算法开始
\small
\caption{Inverse of the Graph Residual Net for Graph Source Localization} %算法的题目 
\begin{algorithmic}[1] %此处的[1]控制一下算法中的每句前面都有标号 

\REQUIRE $f_W, g, Y_T, m$ \COMMENT{the number of iterations}. %输入条件(此处的REQUIRE默认关键字为Require) 
\ENSURE $z$ \COMMENT{an estimation of the source vector $x$}. %输出结果(此处的ENSURE默认关键字为Ensure)
\STATE $\zeta^0=Y_T$ \COMMENT{Initialization of label propagation}
\FOR{i=1 to m} 
\STATE $\zeta^i=2Y_T-g(\zeta^{i-1})$ \COMMENT{Fix point iteration of label propagation}
\ENDFOR
\STATE $z^0=\zeta^m$ \COMMENT{Initialization of feature construction}
\FOR{i=1 to m} 
\STATE $z^i=2\zeta^m-f_W(z^{i-1})$ \COMMENT{Fix point iteration of feature construction}
\ENDFOR
\STATE Output $z$.
\end{algorithmic}
\label{algo:backward invertible graph residual net}
\end{algorithm}

Next, we provide theoretical guarantees on the invertibility of the graph residual net. Specifically, we prove a sufficient condition to ensure invertibility and discuss practical issues to satisfy such conditions. The following theorem provides a sufficient condition for the invertibility of the graph residual net.
\begin{theorem}[Sufficient Condition for the invertibility of the graph residual net]
The graph residual net $P$ is invertible if $L_f<1$ and $L_g<1$, where $L_f$ and $L_g$ are Lipschitz constants of $f_W(x)$ and $g(\zeta)$, respectively. 
\label{theo:sufficient conditions}
\end{theorem}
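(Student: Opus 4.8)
The plan is to exploit the decomposition $P = G \circ F_W$ and observe that it suffices to prove each residual block is a bijection, since a composition of bijections is again a bijection with inverse $P^{-1} = F_W^{-1} \circ G^{-1}$. This reduces the statement to two structurally identical claims: that $F_W(x) = (f_W(x)+x)/2$ is invertible whenever $L_f < 1$, and that $G(\zeta) = (g(\zeta)+\zeta)/2$ is invertible whenever $L_g < 1$. I would therefore prove the claim for one block in detail and transcribe it for the other. The overall technique mirrors the invertibility argument for i-ResNet: turn each inversion into a fixed-point problem and invoke the contraction mapping (Banach) theorem.

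For the block $G$, I would recast the inversion as a fixed-point equation. Given a target $Y_T$, solving $Y_T = (g(\zeta)+\zeta)/2$ for $\zeta$ is equivalent to finding a fixed point of the map $\Psi_{Y_T}(\zeta) := 2Y_T - g(\zeta)$, since $\zeta = \Psi_{Y_T}(\zeta)$ rearranges exactly to $2Y_T = g(\zeta)+\zeta$. The key estimate is the contraction bound: for any $\zeta_1,\zeta_2$ we have $\|\Psi_{Y_T}(\zeta_1) - \Psi_{Y_T}(\zeta_2)\| = \|g(\zeta_2) - g(\zeta_1)\| \le L_g \|\zeta_1 - \zeta_2\|$, where the additive constant $2Y_T$ cancels. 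Because $L_g < 1$, $\Psi_{Y_T}$ is a contraction on the complete space $\mathbb{R}^n$, so the Banach fixed-point theorem yields a unique fixed point for every $Y_T$. Uniqueness gives injectivity of $G$ and existence for every $Y_T$ gives surjectivity, so $G$ is a bijection and $G^{-1}$ is well defined. The same theorem guarantees that the Picard iteration $\zeta^i = 2Y_T - g(\zeta^{i-1})$ converges to the fixed point, which is exactly Lines 2--4 of Algorithm \ref{algo:backward invertible graph residual net}; the identical argument with constant $L_f < 1$ handles $F_W$ and justifies Lines 6--8.

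The step I expect to require the most care is not the contraction estimate but the surjectivity and well-posedness bookkeeping. I must confirm that the domain is a complete metric space (which $\mathbb{R}^n$ is), so that the Banach theorem supplies existence of a fixed point for every target rather than merely uniqueness when one happens to exist, and I must check that the averaged residual form $(g(\zeta)+\zeta)/2$ does not inflate the effective contraction rate. A short computation shows the rescaling is absorbed into the constant $2Y_T$, so the fixed-point map still contracts with rate exactly $L_g$ (respectively $L_f$). Once both blocks are shown to be bijections, composing them establishes that $P$ is invertible, completing the proof.
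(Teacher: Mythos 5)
Your proposal is correct and follows essentially the same route as the paper's proof: decompose $P = G \circ F_W$, recast the inversion of each residual block as the fixed-point equation $\zeta = 2Y_T - g(\zeta)$ (respectively $x = 2\zeta - f_W(x)$), and invoke the Banach fixed-point theorem using $L_g < 1$ (respectively $L_f < 1$). Your version is in fact slightly more careful than the paper's, since you make explicit the injectivity/surjectivity bookkeeping (uniqueness of the fixed point gives injectivity, existence for every target gives surjectivity) that the paper leaves implicit in its appeal to convergence of the iterations.
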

\begin{proof}
Because $P=G(F_W)$, $P$ is invertible if $F_W$ and $G$ are invertible. We have $x=2\zeta-f_W(x)$ and $\zeta=2Y_T-g(\zeta)$ by the definitions of $F_W$ and $G$, and rewrite them as iterations as follows:
\begin{align*}
    &x^0=\zeta \ \text{and} \ x^{k+1}=2\zeta-f_W(x^k)\\& \zeta^0=Y_T \ \text{and} \ \zeta^{k+1}=2Y_T-g(\zeta^k) 
\end{align*}
where $\lim\nolimits_{k\rightarrow\infty}x^k=x$ and $\lim\nolimits_{k\rightarrow\infty}\zeta^k=\zeta$ are fixed points if $x^k$ and $\zeta^k$ converge. Because $f_W$ and $g$ are operators on a Banach space, $L_f<1$ and $L_g<1$ guarantee convergence by the Banach fixed point theorem \cite{behrmann2019invertible}.
\end{proof}
Then the following lemma provides the upper bound of the Lipschitz constants of the graph residual net.
\begin{lemma}[The Lipschitz constants of the graph residual net]
Let $L_P$ and $L_{P^{-1}}$ be Lipschitz constants of $P$ and $P^{-1}$, respectively, then $L_P\leq\frac{(1+L_f)(1+L_g)}{4}$, and $L_{P^{-1}}\leq \frac{4}{(1-L_f)(1-L_g)}$.
\label{lemma:lipschitz constant}
\end{lemma}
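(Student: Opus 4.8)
The plan is to bound the Lipschitz constants of the two residual blocks $F_W$ and $G$ individually, then combine them multiplicatively through the composition $P=G\circ F_W$ (and $P^{-1}=F_W^{-1}\circ G^{-1}$). For the forward direction, I would start from the definition $F_W(x)=(f_W(x)+x)/2$ and apply the triangle inequality: for any $x_1,x_2$,
\[
\|F_W(x_1)-F_W(x_2)\|=\tfrac{1}{2}\|(f_W(x_1)-f_W(x_2))+(x_1-x_2)\|\le\tfrac{1}{2}(L_f+1)\|x_1-x_2\|,
\]
giving $L_{F_W}\le(1+L_f)/2$, and symmetrically $L_G\le(1+L_g)/2$. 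Since the Lipschitz constant of a composition is at most the product of the individual constants, $L_P\le L_G\cdot L_{F_W}\le(1+L_f)(1+L_g)/4$, which is the first claim.

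For the inverse direction, the crucial step---and the one I expect to be the main obstacle---is establishing a reverse (lower) Lipschitz bound on each residual block, since upper bounds alone say nothing about how the inverse stretches distances. The key is the reverse triangle inequality: for a block $h(x)=(\phi(x)+x)/2$ whose inner map $\phi$ has Lipschitz constant $L<1$,
\[
\|h(x_1)-h(x_2)\|\ge\tfrac{1}{2}\bigl(\|x_1-x_2\|-\|\phi(x_1)-\phi(x_2)\|\bigr)\ge\tfrac{1}{2}(1-L)\|x_1-x_2\|.
\]
The hypothesis $L<1$---guaranteed here by the assumptions $L_f<1$ and $L_g<1$ via Theorem~\ref{theo:sufficient conditions}, which also secures that the inverses $F_W^{-1}$ and $G^{-1}$ exist---is exactly what keeps the coefficient $(1-L)/2$ strictly positive, so the bound is nontrivial.

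Finally, I would convert each lower bound into an upper bound on the inverse. Writing $y_i=h(x_i)$ so that $x_i=h^{-1}(y_i)$, the inequality above rearranges to $\|h^{-1}(y_1)-h^{-1}(y_2)\|\le\tfrac{2}{1-L}\|y_1-y_2\|$, hence $L_{F_W^{-1}}\le 2/(1-L_f)$ and $L_{G^{-1}}\le 2/(1-L_g)$. Composing these through $P^{-1}=F_W^{-1}\circ G^{-1}$ then yields $L_{P^{-1}}\le L_{F_W^{-1}}\cdot L_{G^{-1}}\le 4/((1-L_f)(1-L_g))$, completing the second claim. The only bookkeeping point to verify is the order of composition in the inverse, which follows immediately from $P=G\circ F_W\Rightarrow P^{-1}=F_W^{-1}\circ G^{-1}$.
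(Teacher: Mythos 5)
Your proposal is correct and follows essentially the same argument as the paper's proof: the triangle inequality gives the upper bound $L_P\leq\frac{(1+L_f)(1+L_g)}{4}$, the reverse triangle inequality gives the lower bound $\Vert P(x'')-P(x')\Vert\geq \frac{(1-L_f)(1-L_g)}{4}\Vert x''-x'\Vert$, and substituting $y=P(x)$ converts that lower bound into the upper bound on $L_{P^{-1}}$. The only difference is bookkeeping: you bound and invert each residual block separately and then compose via $P^{-1}=F_W^{-1}\circ G^{-1}$, whereas the paper chains the same inequalities through the composite map $P$ and performs a single inversion at the end; the constants and all intermediate estimates coincide.
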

\begin{sproof}
To prove this lemma, we need to show that for any $x^{'},x^{''},\Vert P(x^{''})-P(x^{'})\Vert\leq \frac{(L_f+1)(L_g+1)}{4}\Vert x^{''}-x^{'}\Vert$, and for any  $y^{'},y^{''},
    \Vert P^{-1}(y^{''})-P^{-1}(y^{'})\Vert\leq \frac{4}{(1-L_f)(1-L_g)}\Vert y^{''}-y^{'}\Vert$. The complete proof is shown in Section \ref{sec:proof lipschitz constant} in the appendix.
\end{sproof}
Based on Theorem \ref{theo:sufficient conditions} and Lemma \ref{lemma:lipschitz constant}, we conclude that the Lipschitz constant of $P$  is less than 1 and therefore $P$ is invertible. Now we briefly discuss how to guarantee the Lipschitz constraints in practice. For $F_W$, which contains a set of learnable weights $W$, the power iteration method can be applied to normalize $W$ so that its norm is smaller than 1 \cite{gouk2021regularisation}. For $G$, many classical propagation functions such as the Independent Cascade (IC) function satisfies this condition \cite{xia2021DeepIS}.

\subsection{Validity-aware Layers}
\label{sec:validity layers}
\indent   The invertible graph residual net provides an estimation of source localization, However, the validity constraint $\Phi(x) = 0$ is still required to satisfy. Specifically, we aim to resolve the following optimization problem:
\begin{align}
    \min\nolimits_{x} R(x) \quad s.t. \quad x=C(z), \quad \Phi(x)= 0. \label{eq:original problem}
\end{align}
where  $R(x)$ is a loss function, and $z$ is the output of the graph residual net in Algorithm \ref{algo:backward invertible graph residual net}. However, Equation \eqref{eq:original problem} is unsolvable due to the potential constraint conflict $x=C(z)$ and $\Phi(x)= 0$. To address this, Equation \eqref{eq:original problem} is reduced to the following optimization problem:
\begin{align}
     \min\nolimits_{x} R(x)+\frac{\tau}{2}\Vert x-C(z)\Vert^2_2 \label{eq:unroll} \ \ s.t. \ \Phi(x)= 0. 
\end{align}
where $\tau>0$ is a tuning parameter to balance the loss and the error compensation module. Then the task here is to design activation layers, in order to solve Equation \eqref{eq:unroll}. While traditional activation layers focus on individual nodes, and cannot handle difficult constraints, unrolled optimization techniques are potential ways to incorporate complex validity patterns into the model. Motivated by the recent development of ADMM-Net \cite{sun2016deep} and OptNet \cite{Amos2017OptNet}, a potential solution to Equation \eqref{eq:unroll} can be achieved by unrolling the problem into a neural net, where each layer is designed for the following optimization problem: 
\begin{align}
x^{k+1}&\leftarrow \arg\min\nolimits_{x} J^k(x) \ s.t. \ \Phi(x)= 0, \label{eq:layer problem}
\end{align}
where $J^k(x)=R(x)+\frac{\tau^k}{2}\Vert x-C^k(x^k)\Vert^2_2$ and $x^0=z$. $x^k$ and $x^{k+1}$ are the input and the output of the $k$-th layer, respectively. To solve Equation \eqref{eq:layer problem},
the augmented Lagrangian function is formulated mathematically as follows \cite{bertsekas2014constrained}:
\begin{align*}
    H^k(x,\lambda)=J^k(x)+\Psi^k(x,\lambda),
\end{align*}
where $\Psi^k(x,\lambda)=\frac{1}{2\rho^k}((\lambda+\rho^k\Phi(x))^2-\lambda^2)$, $\rho>0$ is a hyperparameter, and $\lambda$ is a dual variable to address $\Phi(x)=0$. To optimize $H^k(x,\lambda)$, the OptNet updates variables via the implicit gradients of the Karush–Kuhn–Tucker (KKT) conditions \cite{Amos2017OptNet}. However, its computational efficiency is limited  due to the nonconvexity and nonlinearity of $\Phi(x)$, and scales poorly on the large-scale networks (i.e. Challenge 3 in Section \ref{sec:challenge}). To address this, we utilize a linearization technique to transform the nonconvex $H^k(x,\lambda)$ to the convex $h^k(x)$ as follows \cite{yang2013linearized}:
\begin{align*}
 h^k(x)&=J^k(x)+\!\partial_x (\Psi^k)^T(x^k,\lambda^k)(x-x^k)+\frac{\alpha^k}{2}\Vert x-x^k\Vert^2_2 ,
\end{align*}
where $\alpha^k>0$ is a hyperparameter to control the quadratic term.
 We formulate the validity-aware layer as follows:
\begin{align*}
    x^{k+1}&\leftarrow\arg\min\nolimits_{x}  h^k(x),\\
    \lambda^{k+1}&\leftarrow \lambda^k+\rho^k \Phi(x^{k+1}).
    \end{align*}
    Specifically, $C^k,\rho^k, \tau^k, \alpha^k$ can be considered as learnable parameters of the $k$-th layer. Notice that if $R(x)$ is a mean square error, then $h^k(x)$ is quadratic and has a closed-form solution. Validity-aware layers can be trained by state-of-the-art optimizers such as SGD and Adam \cite{Diederik2015Adam}.
\subsection{Convergence of the Proposed IVGD}
\label{sec:convergence}
\indent Like other unrolled optimization models which solve objective functions effectively, our proposed IVGD can address Equation \eqref{eq:unroll} by closed-form solutions. However, there lacks an understanding of the convergence of unrolled optimization models. This is because they usually involve many learnable parameters, which complicate the investigation of convergence. In this section, we provide the convergence guarantees of the proposed IVGD for the linear constraint $\Phi(x)=Ax-b$ where $A$ and $b$ are a given matrix and vector, respectively. Specifically, we propose a novel condition based on learnable parameters to ensure $x^{k}$ and $\lambda^{k}$ are closer to a solution as layers go deeper. Due to the space limit, we show the sketches of all proofs, and the complete proofs are available in the appendix.\\
\indent The optimality conditions of Equation \eqref{eq:layer problem} are shown as follows:
  \begin{align*}
        Ax^k_*-b=0, \ \nabla J^k(x^k_*)+A^T\lambda^k_*=0,
    \end{align*}
    where $(x^k_*,\lambda^k_*)$ is an optimal solution (not necessarily unique) to Equation \eqref{eq:layer problem}, which depends on $\tau^k$ and $C^k$.  The following lemma provides the relationship between $(x^k_*,\lambda^k_*)$ and $(x^{k+1},\lambda^{k+1})$.
    \begin{lemma}
For any $k\in \mathbb{N}$, it satisfies
\begin{align*}
   &\frac{1}{\rho^k}(\lambda^{k}-\lambda^{k+1})^T(\lambda^{k+1}-\lambda^k_*) +\alpha^k(x^k-x^{k+1})^T(x^{k+1}-x^k_*)\\&\geq (x^{k+1}-x^k)^TA^T(\lambda^k-\lambda^{k+1}).
\end{align*} \label{lemma: preliminary}
\end{lemma}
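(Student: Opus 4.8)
The plan is to derive the inequality from the first-order stationarity of the inner iterate $x^{k+1}$ together with the monotonicity of $\nabla J^k$ (which holds because $R$, and hence $J^k$, is convex, e.g. the mean-square-error case noted above). First I would write the optimality condition of the unconstrained subproblem $x^{k+1}\leftarrow\arg\min_x h^k(x)$. Since $h^k$ is differentiable, $\nabla h^k(x^{k+1})=0$ gives
\[
\nabla J^k(x^{k+1})+\partial_x\Psi^k(x^k,\lambda^k)+\alpha^k(x^{k+1}-x^k)=0.
\]
For the linear constraint $\Phi(x)=Ax-b$ a direct computation yields $\partial_x\Psi^k(x^k,\lambda^k)=A^T(\lambda^k+\rho^k(Ax^k-b))$, which turns the stationarity condition into a usable expression for $\nabla J^k(x^{k+1})$.

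Next I would eliminate the stale iterate $x^k$ inside the penalty gradient using the dual update $\lambda^{k+1}=\lambda^k+\rho^k(Ax^{k+1}-b)$. Writing $\rho^k(Ax^k-b)=(\lambda^{k+1}-\lambda^k)+\rho^k A(x^k-x^{k+1})$ produces the clean identity
\[
\nabla J^k(x^{k+1})=-A^T\lambda^{k+1}-\rho^k A^TA(x^k-x^{k+1})-\alpha^k(x^{k+1}-x^k).
\]
I would also record the two constraint identities that later convert the $A$-terms into pure dual inner products: the dual update gives $Ax^{k+1}-b=\tfrac{1}{\rho^k}(\lambda^{k+1}-\lambda^k)$, while primal optimality gives $Ax^k_*-b=0$, so that $A(x^{k+1}-x^k_*)=\tfrac{1}{\rho^k}(\lambda^{k+1}-\lambda^k)$.

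The crux of the argument is to invoke gradient monotonicity of the convex $J^k$ at the points $x^{k+1}$ and $x^k_*$, and to substitute the dual optimality condition $\nabla J^k(x^k_*)=-A^T\lambda^k_*$, yielding
\[
\bigl(\nabla J^k(x^{k+1})+A^T\lambda^k_*\bigr)^T(x^{k+1}-x^k_*)\ge 0.
\]
Plugging in the clean expression for $\nabla J^k(x^{k+1})$, the piece $(A^T\lambda^k_*-A^T\lambda^{k+1})^T(x^{k+1}-x^k_*)$ becomes $\tfrac{1}{\rho^k}(\lambda^k-\lambda^{k+1})^T(\lambda^{k+1}-\lambda^k_*)$ via the identity for $A(x^{k+1}-x^k_*)$, the cross term $-\rho^k(A^TA(x^k-x^{k+1}))^T(x^{k+1}-x^k_*)$ collapses to $-(x^{k+1}-x^k)^TA^T(\lambda^k-\lambda^{k+1})$, and the $\alpha^k$ term remains as $\alpha^k(x^k-x^{k+1})^T(x^{k+1}-x^k_*)$. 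Collecting these three terms and rearranging gives exactly the claimed inequality.

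I expect the main obstacle to be the bookkeeping in the second and third steps: correctly computing $\partial_x\Psi^k$ for the linear constraint and then consistently rewriting every occurrence of $A(x^{k+1}-x^k_*)$ and $Ax^{k+1}-b$ in terms of the dual increment $\lambda^{k+1}-\lambda^k$, so that the mixed $A^TA$ contribution cancels against the dual cross term that lands on the right-hand side. Conceptually, though, the only structural input beyond algebra is the monotonicity of $\nabla J^k$; this is precisely where convexity of $R$ is essential, and without it the inequality would not follow.
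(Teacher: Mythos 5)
Your proposal is correct and follows essentially the same route as the paper's proof: the stationarity condition of the linearized subproblem, substitution of the dual update $\lambda^{k+1}=\lambda^k+\rho^k(Ax^{k+1}-b)$ to obtain the clean expression for $\nabla J^k(x^{k+1})$, and then convexity of $J^k$ combined with the identities $Ax^k_*-b=0$ and $A(x^{k+1}-x^k_*)=\frac{1}{\rho^k}(\lambda^{k+1}-\lambda^k)$ to convert the $A$-terms into dual inner products. The only cosmetic difference is that you invoke gradient monotonicity directly, whereas the paper sums the two first-order convexity inequalities at $x^{k+1}$ and $x^k_*$ --- which is the same step.
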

\begin{sproof}
It can be obtained by the optimality conditions of $x^{k+1}$ and $x^k_*$. The complete proof is shown in Section \ref{sec:proof preliminary} in the appendix.
\end{sproof}
\indent Motivated by Lemma \ref{lemma: preliminary}, we 
    let $u_1=(x_1,\lambda_1)$ and $u_2=(x_2,\lambda_2)$, and define an inner product by
    \begin{align}
\langle  u_1,u_2\rangle_{M^k}=\frac{1}{\rho^k} \lambda_1^T \lambda_2+\alpha^k x^T_1 x_2. \label{eq: notation}
    \end{align}
     and the induced norm $\Vert u\Vert^2_{M^k}=\langle u,u\rangle_{M^k}$. Denote $u^k_*=(x^k_*,\lambda^k_*)$ and $u^k=(x^k,\lambda^k)$. The following theorem states that $u^{k+1}$ is a feasible solution to Equation \eqref{eq:layer problem}.
    \begin{theorem}[Asymptotic Convergence]
     Assume $0<D_1\leq \alpha^k\leq D_2<\infty$, $0<D_3\leq \rho^k\leq D_4<\infty$  and $\alpha^k-\rho^kr(A^TA)>0$ where $D_1$, $D_2$, $D_3$ and $D_4$ are constant, and $r(A^TA)$ denotes the spectral radius of $A^TA$. If there exists $(C^k,\rho^k, \tau^k,\alpha^k)$ such that $\Vert u^{k+1}-u^{k+1}_{*}\Vert^2_{M^{k+1}}\leq \Vert u^{k+1}-u^{k}_{*}\Vert^2_{M^{k}}$, then we have \\(a). $\Vert u^k-u^{k+1}\Vert^2_{M^k}\rightarrow 0$.\\ (b). $\Vert u^k-u^{k}_*\Vert^2_{M^k}$ is nonincreasing and hence converges.\\
    (c). $u^{k+1}$ is a feasible solution to Equation \eqref{eq:layer problem} . That is,\\ $\lim\nolimits_{k\rightarrow\infty} Ax^{k+1}\!-\!b=0$, $\lim\nolimits _{k\rightarrow\infty} \nabla J^k(x^{k+1})\!+\!A^T\lambda^{k+1}\!=\!0$.
    \label{theo: decrease u}
    \end{theorem}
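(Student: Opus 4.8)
The plan is to establish a single descent inequality that bounds $\|u^{k+1}-u^k_*\|^2_{M^k}$ by $\|u^k-u^k_*\|^2_{M^k}$ minus two nonnegative correction terms, and then to chain it with the hypothesis $\|u^{k+1}-u^{k+1}_*\|^2_{M^{k+1}}\le\|u^{k+1}-u^k_*\|^2_{M^k}$ so that the sequence $a_k:=\|u^k-u^k_*\|^2_{M^k}$ becomes monotone. First I would recognize that the left-hand side of Lemma \ref{lemma: preliminary} is exactly the inner product $\langle u^k-u^{k+1},\,u^{k+1}-u^k_*\rangle_{M^k}$, and apply the polarization identity
\[
2\langle a-b,\,b-c\rangle_{M^k}=\|a-c\|^2_{M^k}-\|a-b\|^2_{M^k}-\|b-c\|^2_{M^k}
\]
with $(a,b,c)=(u^k,u^{k+1},u^k_*)$. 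This turns Lemma \ref{lemma: preliminary} into
\[
\|u^{k+1}-u^k_*\|^2_{M^k}\le\|u^k-u^k_*\|^2_{M^k}-\|u^k-u^{k+1}\|^2_{M^k}-2(x^{k+1}-x^k)^TA^T(\lambda^k-\lambda^{k+1}).
\]

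Next I would eliminate the dual variables using the update $\lambda^{k+1}-\lambda^k=\rho^k(Ax^{k+1}-b)$ together with primal optimality $Ax^k_*=b$, which gives $Ax^{k+1}-b=A(x^{k+1}-x^k_*)$. Setting $p=A(x^{k+1}-x^k)$ and $q=A(x^{k+1}-x^k_*)$, both the $\lambda$-block of $\|u^k-u^{k+1}\|^2_{M^k}$ and the cross term become expressible through $p$ and $q$; completing the square via $\rho^k(\|q\|^2-2p^Tq)=\rho^k\|q-p\|^2-\rho^k\|p\|^2$ and noting $q-p=A(x^k-x^k_*)$ reduces the right-hand side to a form involving $-\rho^k\|A(x^k-x^k_*)\|^2$ and $+\rho^k\|p\|^2-\alpha^k\|x^{k+1}-x^k\|^2$. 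Bounding $\|p\|^2\le r(A^TA)\|x^{k+1}-x^k\|^2$ and invoking $\alpha^k-\rho^k r(A^TA)>0$ then yields the clean inequality
\[
\|u^{k+1}-u^k_*\|^2_{M^k}\le\|u^k-u^k_*\|^2_{M^k}-\rho^k\|A(x^k-x^k_*)\|^2-(\alpha^k-\rho^k r(A^TA))\|x^{k+1}-x^k\|^2,
\]
in which both subtracted terms are nonnegative.

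Chaining this with the hypothesis gives $a_{k+1}\le a_k$ minus the two nonnegative terms, so $a_k$ is nonincreasing and bounded below by $0$, proving (b). Telescoping over $k$ shows the two subtracted terms are summable; since $\rho^k\ge D_3>0$ and $\alpha^k-\rho^k r(A^TA)$ is bounded below by a positive constant (a consequence of the parameter bounds), I conclude $\|A(x^k-x^k_*)\|\to 0$, i.e. $\|Ax^k-b\|\to 0$, and $\|x^{k+1}-x^k\|\to 0$. The triangle inequality then gives primal feasibility $\|Ax^{k+1}-b\|\to 0$, the dual update forces $\|\lambda^{k+1}-\lambda^k\|=\rho^k\|Ax^{k+1}-b\|\to 0$, and together with the bounds on $\rho^k,\alpha^k$ this yields $\|u^k-u^{k+1}\|^2_{M^k}\to 0$, proving (a). For the dual-feasibility half of (c), I would use the first-order optimality of the linearized subproblem $x^{k+1}=\arg\min_x h^k(x)$, namely $\nabla J^k(x^{k+1})+\partial_x\Psi^k(x^k,\lambda^k)+\alpha^k(x^{k+1}-x^k)=0$; substituting $\partial_x\Psi^k(x^k,\lambda^k)=A^T(\lambda^k+\rho^k(Ax^k-b))$ and the dual update collapses $\nabla J^k(x^{k+1})+A^T\lambda^{k+1}$ to $(\rho^kA^TA-\alpha^k I)(x^{k+1}-x^k)$, whose norm is controlled by $\|x^{k+1}-x^k\|\to 0$.

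The main obstacle I anticipate is the second step: correctly regrouping the mixed primal--dual cross term and the $\lambda$-block so that everything is written through the two vectors $A(x^{k+1}-x^k)$ and $A(x^k-x^k_*)$, completing the square, and invoking $\alpha^k-\rho^k r(A^TA)>0$ to force the corrections to be nonnegative. This bookkeeping is delicate because a single sign error would destroy the monotonicity; moreover it is essential to \emph{retain} rather than discard the term $\rho^k\|A(x^k-x^k_*)\|^2$, since it is precisely what later delivers the primal-feasibility conclusion $\|Ax^k-b\|\to 0$ in part (c).
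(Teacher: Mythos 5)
Your proof is correct, and while it follows the paper's overall skeleton (recast Lemma \ref{lemma: preliminary} as an $M^k$-inner-product bound, derive a per-layer descent inequality, chain it with the contraction hypothesis, telescope, and pass to the limit in the optimality condition of the linearized subproblem), your handling of the cross term $(x^{k+1}-x^k)^TA^T(\lambda^k-\lambda^{k+1})$ is genuinely different from the paper's. The paper splits this term by Young's inequality with a tuned weight $\eta^k=\alpha^k/(1+\delta^k)$, $\delta^k=\frac{\alpha^k-\rho^k r(A^TA)}{2\rho^k r(A^TA)}$, absorbs the two pieces into the primal and dual blocks of $\Vert u^{k+1}-u^k\Vert^2_{M^k}$, and retains $\mu^k\Vert u^{k+1}-u^k\Vert^2_{M^k}$ with $\mu^k=\min\left(\frac{\rho^k\delta^k r(A^TA)}{\alpha^k},\frac{\delta^k}{1+\delta^k}\right)$; part (a) is then immediate, and primal feasibility in (c) is recovered afterwards from the dual update $Ax^{k+1}-b=(\lambda^{k+1}-\lambda^k)/\rho^k$. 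You instead substitute the dual update exactly, $\lambda^{k+1}-\lambda^k=\rho^k A(x^{k+1}-x^k_*)$ (valid since $Ax^k_*=b$), and complete the square: with $p=A(x^{k+1}-x^k)$ and $q=A(x^{k+1}-x^k_*)$, the dual block $\rho^k\Vert q\Vert^2_2$ and the cross term $2\rho^k p^Tq$ combine into $-\rho^k\Vert A(x^k-x^k_*)\Vert^2_2+\rho^k\Vert p\Vert^2_2$, and the spectral-radius bound gives your descent inequality with retained terms $\rho^k\Vert Ax^k-b\Vert^2_2$ and $(\alpha^k-\rho^k r(A^TA))\Vert x^{k+1}-x^k\Vert^2_2$ --- I checked this algebra and it is right. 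Your route buys primal feasibility directly out of the telescoped sum, with no detour through the dual increments, and dispenses with the auxiliary Young parameter, so the bookkeeping is tighter; the paper's route buys part (a) in a single step. Your reconstruction of (a) from $\Vert x^{k+1}-x^k\Vert\rightarrow 0$, the dual update, the triangle inequality, and the bounds $D_1,\dots,D_4$ is sound, and your dual-feasibility argument is identical in substance to the paper's passage to the limit in Equation \eqref{eq: x optimality}.

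One caveat. Your parenthetical claim that $\alpha^k-\rho^k r(A^TA)$ is bounded below by a positive constant ``as a consequence of the parameter bounds'' is not literally true: the hypotheses give only $\alpha^k-\rho^k r(A^TA)>0$ for each $k$, and this quantity may tend to $0$ even with $\alpha^k\in[D_1,D_2]$ and $\rho^k\in[D_3,D_4]$, in which case summability of $(\alpha^k-\rho^k r(A^TA))\Vert x^{k+1}-x^k\Vert^2_2$ does not force $\Vert x^{k+1}-x^k\Vert\rightarrow 0$. You need the uniform bound $\inf_k\left(\alpha^k-\rho^k r(A^TA)\right)>0$ as an explicit assumption. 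To be fair, the paper's proof has exactly the same gap: it infers $\Vert u^{k+1}-u^k\Vert^2_{M^k}\rightarrow 0$ from $\mu^k>0$ and summability of $\mu^k\Vert u^{k+1}-u^k\Vert^2_{M^k}$, which likewise requires $\inf_k\mu^k>0$, and $\mu^k\rightarrow 0$ precisely when $\alpha^k-\rho^k r(A^TA)\rightarrow 0$. So this is an implicit assumption you share with the paper rather than a defect peculiar to your argument, but you should state it as an assumption rather than derive it from the stated bounds.
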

    \begin{sproof}
     To prove this theorem, we need to show that
    $\Vert u^k-u^k_*\Vert^2_{M^k}\geq \Vert u^{k+1}-u^{k+1}_*\Vert^2_{M^{k+1}}+ \mu^k\Vert u^{k+1}-u^k\Vert^2_{M^k}$ where $\mu^k>0$, which can be obtained by Lemma \ref{lemma: preliminary}. The complete proof is in Section \ref{sec:proof decrease u} in the appendix.
    \end{sproof}
    \indent The condition $\Vert u^{k+1}-u^{k+1}_{*}\Vert^2_{M^{k+1}}\leq \Vert u^{k+1}-u^{k}_{*}\Vert^2_{M^{k}}$ guarantees there exist learnable parameters to make $u^{k+1}$ contractive with respect to the norm induced by $M^k$. Theorem \ref{theo: decrease u} ensures that the gap between $u^{k+1}$ and $u^k$ converges to $0$, and $u^{k+1}$ converges to a feasible solution to Equation \eqref{eq:layer problem}  with constraint satisfaction $\lim\nolimits_{k\rightarrow\infty} Ax^{k+1}-b=0$.
\section{Experiment Verification}
\indent In this section,  nine real-world datasets were utilized to test our proposed IVGD compared with state-of-the-art methods. Performance evaluation, ablation studies, sensitivity analysis, and scalability analysis have demonstrated the effectiveness, robustness, and efficiency of the proposed IVGD. All experiments were conducted on a 64-bit machine with  Intel(R) Xeon(R) quad-core processor (W-2123  CPU@ 3.60 GHZ) and 32.0 GB memory.
\subsection{Experimental Protocols}

\subsubsection{Data Description}
\indent We compare our proposed IVGD with the state-of-the-art methods on nine real-world datasets in the experiments, whose statistics are shown in Table \ref{tab:dataset}.  Due to space limit, their descriptions are outlined in Section \ref{sec:dataset} in the Appendix. The Deezer dataset was only used to evaluate the scalability.\\
\indent For all datasets except the Memetracker and the Digg, we generated diffusion cascades based on the following strategy: $10\%$ nodes were chosen as source nodes randomly, and then the diffusion was repeated 60 times for each
source vector. For each cascade, we have a source vector $x$ and a diffusion vector $Y_T$. For the Memetracker and Digg, they provided true source vectors and diffusion vectors.  The ratio of the sizes of the training set and the test set was 8:2.
\subsubsection{Comparison Methods}
\indent For comparison methods, three state-of-the-art approaches LPSI \cite{wang2017multiple}, NetSleuth \cite{prakash2012spotting} and GCNSI \cite{dong2019multiple} are compared with our proposed IVGD, all of which are outlined as follows:\\
\indent 1. LPSI \cite{wang2017multiple}. LPSI is short for Label Propagation based Source Identification. It is inspired by the label propagation algorithm in semi-supervised learning.\\
\indent 2. NetSleuth \cite{prakash2012spotting}. The goal of NetSleuth is to employ the Minimum
Description Length principle to identify the best set of source nodes and virus propagation ripple \cite{prakash2012spotting}.\\
\indent 3. GCNSI \cite{dong2019multiple}. GCNSI is a Graph Convolutional Network (GCN) based source identification algorithm. It used the LPSI to augment input, and then applied the GCN for source identification.\\
\begin{table}[]
\scriptsize
    \centering
    \begin{tabular}{c|c|c|c|c}
    \hline\hline
         Name&\#Nodes&\#Edges&\tabincell{c}{Average Degree}& Diameter  \\\hline
         Karate& 34&78&2.294&5\\\hline
         Dolphins&62&159&5.129&8\\\hline
          Jazz&198&2,742&13.848&9\\\hline
         Network Science &1,589&2,742&3.451&17\\\hline
         Cora-ML&2,810&7,981&5.68&17\\\hline
         Power Grid&4,941&6,594&2.669&46\\\hline
         Memetracker&1,653&4,267&5.432&4\\\hline
         Digg&11,240&47,885&8.52&4\\\hline
         Deezer&47,538&222,887&9.377&-\\\hline\hline
    \end{tabular}
    \caption{Statistics of nine real-world datasets.}
    \label{tab:dataset}
    \vspace{-0.8cm}
\end{table}
\subsubsection{Metrics}
\indent Five metrics were applied to evaluate the performance: the Accuracy (ACC) is the ratio of accurately labeled nodes to all nodes; the Precision (PR) is the ratio of accurately labeled as source nodes to all nodes labeled as source; the Recall (RE) defines the ratio of accurately labeled as source nodes to all true source nodes; the F-Score (FS) is the harmonic mean of Precision and Recall, which is the most important metric for performance evaluation. This is because the proportion of source nodes is far smaller than that of other nodes (i.e. label imbalance). Besides, the Area Under the Receiver operating characteristic curve (AUC) is an important metric to evaluate a classifier given different thresholds.

\subsubsection{Parameter Settings}
\indent For the proposed IVGD, the $f_W(x)$ was chosen to be a 2-layer MLP model, where the number of hidden units was set to 6 for all datasets except Memetracker and Digg \cite{xia2021DeepIS}. It was set to $100$ and $50$ for Memetracker and Digg, respectively. $g(\zeta)$ was chosen to be the IC function, the number of validity-aware layers was set to 10. The error compensation module was a three-layer MLP model, where the number of neurons was 1,000. $\alpha$, $\tau$ and $\rho$ were set to $1$, $10$ and $10^{-3}$, respectively based on the optimal training performance. The learning rate of the SGD was set to $10^{-3}$. The number of the epoch was set to $100$. The equality constraint we used was $\Vert x\Vert_0=\vert S\vert$, which means that the number of source nodes was known in advance. However, it is non-differentiable and nonconvex. To address this, we relaxed the constraint to a linear constraint as follows:  $\sum\nolimits_{i=1}^n x_i=\vert S\vert$.
This constraint was only used in the training phase.\\
\indent For all comparison methods, the $\alpha$ in LPSI and GCNSI was set to $0.01$ and $0.49$, respectively, based on the optimal training performance. The GCN in GCNSI was a two-layer architecture, where the number of hidden neurons was 128. The learning rate in SGD was set to $10^{-3}$.
\begin{table}[]
\scriptsize
\centering
    \begin{tabular}{c|c|c|c|c|c|c|c|c}
    \hline\hline
    &\multicolumn{4}{c|}{Karate}&\multicolumn{4}{c}{Dolphins} \\
    \hline
         Method&ACC&PR&RE&FS&ACC&PR&RE&FS  \\\hline
         LPSI&0.9559&0.6800&1.0000&0.7970&0.9677&0.7790&1.0000&0.8717\\\hline
         NetSleuth&0.9147&0.5371&0.6833&0.5965&0.9306&0.6454&0.7425&0.6904\\\hline
         GCNSI&0.7088&0.1150&0.2667&0.1581&0.6177&0.1015&0.2548&0.1372\\\hline
         IVGD(ours)&\textbf{0.9853}&\textbf{0.8717}&\textbf{1.0000}&\textbf{0.9213}&\textbf{0.9935}&\textbf{0.9444}&\textbf{1.0000}&\textbf{0.9701}\\\hline
            &\multicolumn{4}{|c|}{Network Science}&\multicolumn{4}{c}{Cora-ML} \\
    \hline
         Method&ACC&PR&RE&FS&ACC&PR&RE&FS  \\
         \hline
         LPSI&0.9831&0.8525&1.0000&0.9202&0.9011&0.5067&0.9993&0.6724\\\hline
         NetSleuth&0.9595&0.7642&0.8429&0.8016&0.8229&0.1627&0.1793&0.1706\\\hline
         GCNSI&0.8840&0.0582&0.0135&0.0218&0.8580&0.0970&0.0478&0.0637\\\hline
         IVGD(ours)&\textbf{0.9946}&\textbf{0.9476}&\textbf{1.0000}&\textbf{0.9730}&\textbf{0.9973}&\textbf{0.9744}&\textbf{1.0000}&\textbf{0.9870}\\
         \hline
         &\multicolumn{4}{|c|}{Jazz}&\multicolumn{4}{c}{Power Grid}\\\hline
         Method&ACC&PR&RE&FS&ACC&PR&RE&FS\\\hline
         LPSI&0.9035&0.6074&0.9944&0.7371&0.9673&0.7584&1.0000&0.8624\\\hline
         NetSleuth&0.9222&0.5904&0.6629&0.6245&0.9276&0.6347&0.6986&0.6651\\\hline
         GCNSI&0.7525&0.0685&0.1280&0.0849&0.7125&0.1022&0.2285&0.1410\\\hline
         IVGD(ours)&\textbf{0.9980}&\textbf{0.9802}&\textbf{1.0000}&\textbf{0.9899}&\textbf{0.9902}&\textbf{0.9133}&\textbf{1.0000}&\textbf{0.9546}\\
         
         \hline\hline
    \end{tabular}
    \caption{Test performance of simulations on six datasets: the proposed IVGD dominates in all methods on six datasets.}
    \label{tab: synthetic performance}
    \vspace{-0.8cm}
\end{table}
\begin{figure}
\begin{minipage}{0.49\linewidth}
  \centering
    \includegraphics[width=\linewidth]{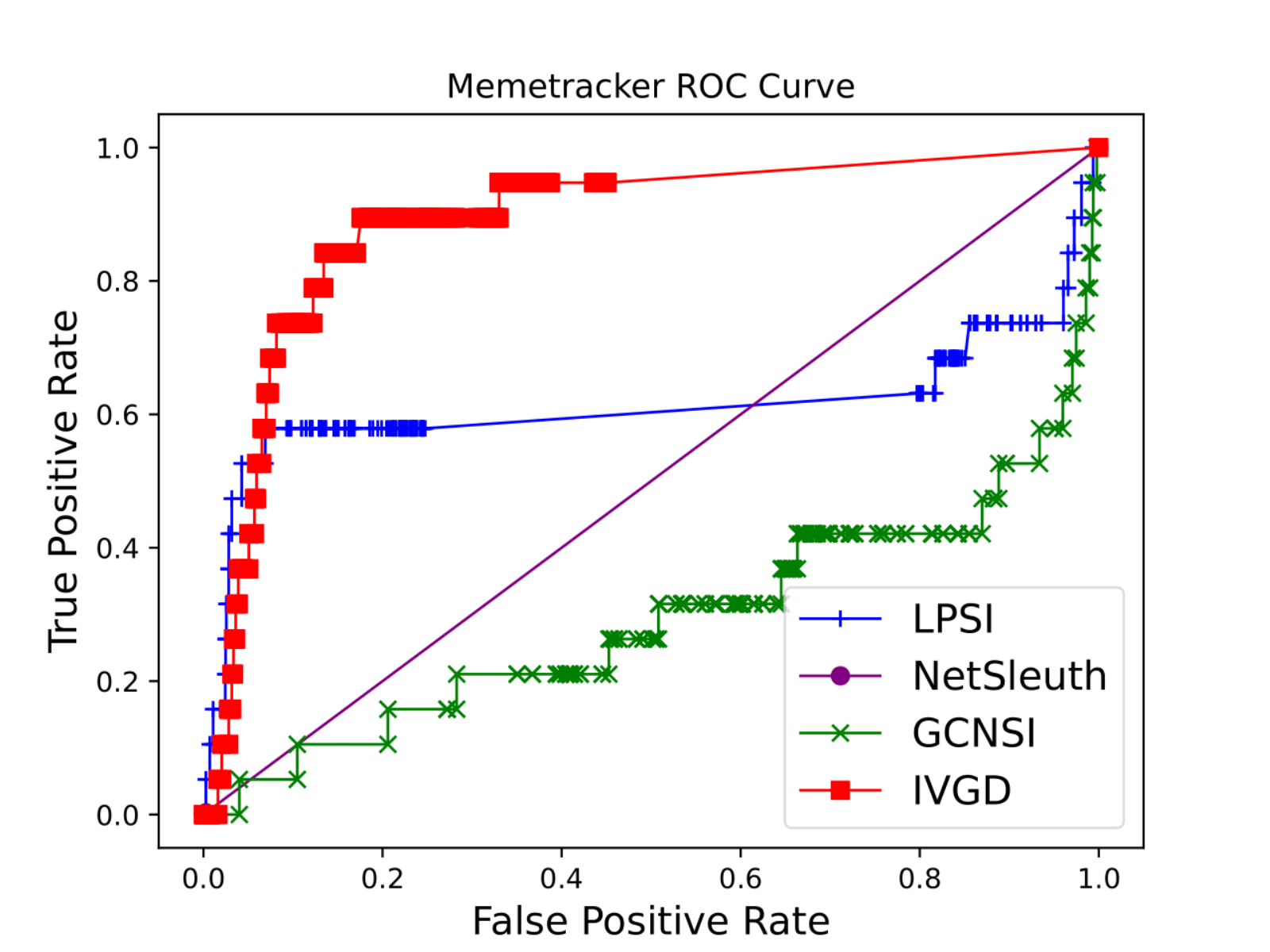}
    \centerline{(a). Memetracker}
\end{minipage}
\hfill
\begin{minipage}{0.49\linewidth}
  \centering
    \includegraphics[width=\linewidth]{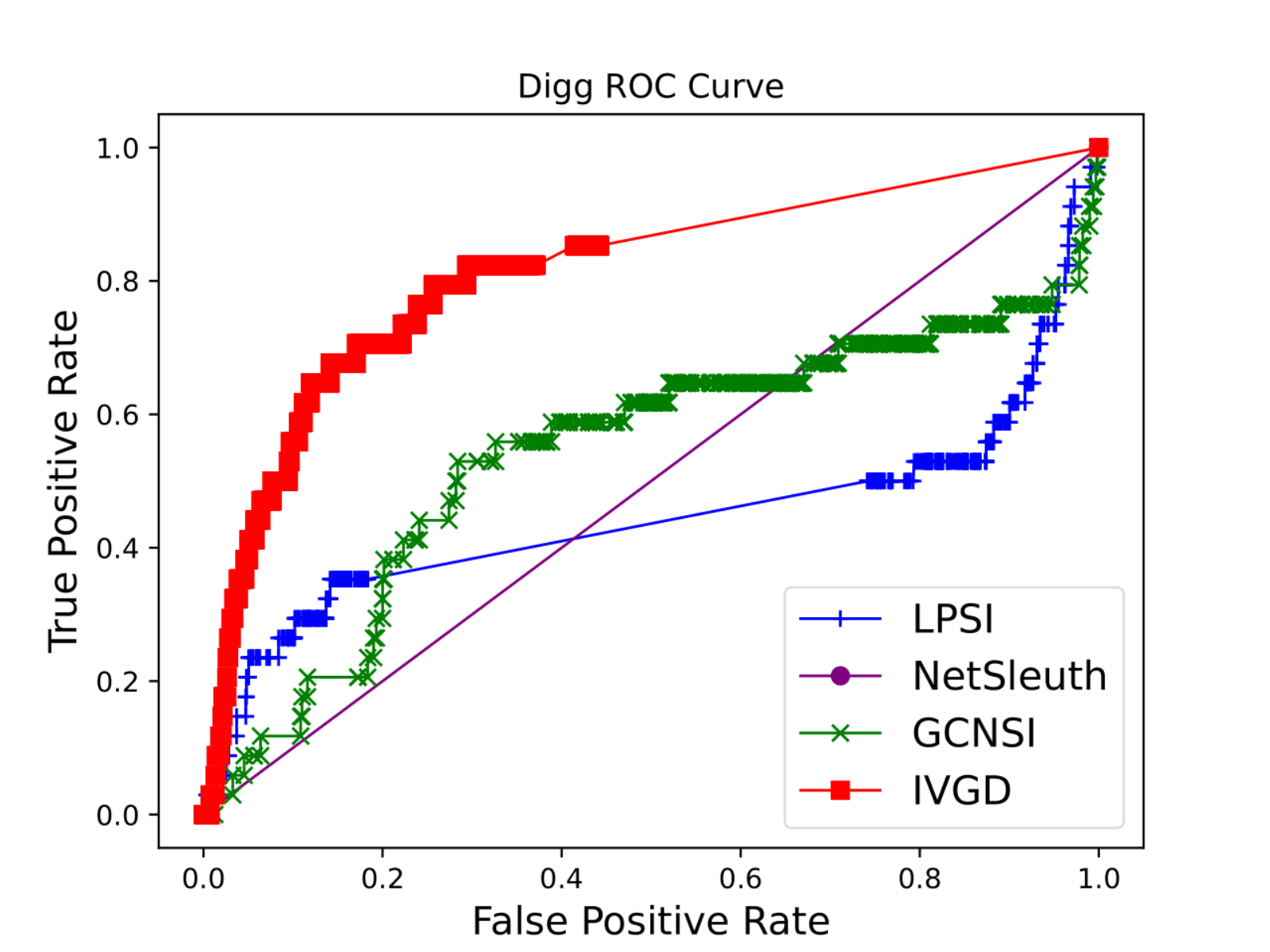}
        \centerline{(b). Digg}
\end{minipage}
        \caption{ROC curves on two real-world datasets: all comparison methods are surrounded by the proposed IVGD.}
    \label{fig:ROC}
    \vspace{-0.4cm}
\end{figure}

\subsection{Performance Evaluation}
\indent The test performance of all methods on six datasets is demonstrated in Table \ref{tab: synthetic performance}.  The best performance is highlighted in bold. Overall, our proposed IVGD outperforms all comparison methods significantly on all datasets. Specifically, the ACCs of the proposed IVGD on six datasets are all above $0.98$, and the PRs are also above $0.9$.  Most importantly, the FS metric of the proposed IVGD is in the vicinity of $0.96$ on average. This substantiates that our proposed IVGD algorithm can accurately predict source nodes despite their scarcity. For comparison methods, the LPSI performs the best followed by NetSleuth: the ACCs of the LPSI are $2\%$ higher than those of NetSleuth, and the PRs are around $10\%$ better. For example, our proposed IVGD attains 0.99 and 0.95 in the ACC and the PR on the Power Grid dataset, respectively, while the counterparts of the LPSI are 0.97 and 0.86, respectively, and the NetSleuth achieves 0.62 and 0.65, respectively. The GCNSI performs poorly on all datasets: its PRs and FSes are surprisingly low, which are below 0.3. For instance, the PR and the FS on the Network Science dataset are 0.08 and 0.03, respectively. This is because it cannot differentiate source nodes from others, and its predictions are in the vicinity of the threshold. This demonstrates that the GCNSI cannot draw a clear decision boundary.
\begin{table}[]
\scriptsize
\centering
    \begin{tabular}{c|c|c|c|c|c|c|c|c}
    \hline\hline
    &\multicolumn{4}{c|}{Karate}&\multicolumn{4}{c}{Dolphins} \\
    \hline
         Method&ACC&PR&RE&FS&ACC&PR&RE&FS  \\\hline
         IVGD(1)&0.9618&0.7167&1.0000&0.8248&0.9694&0.7873&1.0000&0.8774\\\hline
         IVGD(2)&\textbf{0.9882}&\textbf{0.8967}&\textbf{1.0000}&\textbf{0.9356}&0.9726&0.8070&1.0000&0.8904\\\hline
         IVGD(3)&0.9500&0.6583&1.0000&0.7824&0.9613&0.7519&1.0000&0.8517\\\hline
         IVGD&0.9853&0.8717&1.0000&0.9213&\textbf{0.9935}&\textbf{0.9444}&\textbf{1.0000}&\textbf{0.9701}\\\hline
            &\multicolumn{4}{c|}{Network Science}&\multicolumn{4}{c}{Cora-ML}\\\hline
         Method&ACC&PR&RE&FS&ACC&PR&RE&FS   \\
         \hline
         IVGD(1)&0.9859&0.8736&1.0000&0.9324&0.8712&0.4411&1.0000&0.6121\\
         \hline
         IVGD(2)&0.9867&0.8799&1.0000&0.9360&0.9959&0.9617&1.0000&0.9805\\\hline
         IVGD(3)&0.9812&0.8383&1.0000&0.9119&0.9850&0.8717&1.0000&0.9314\\\hline
         IVGD&\textbf{0.9946}&\textbf{0.9476}&\textbf{1.0000}&\textbf{0.9730}&\textbf{0.9973}&\textbf{0.9744}&\textbf{1.0000}&\textbf{0.9870}\\
         \hline
         &\multicolumn{4}{c}{Jazz}&\multicolumn{4}{|c}{Power Grid}\\
         \hline
         Method&ACC&PR&RE&FS &ACC&PR&RE&FS\\
         \hline
         IVGD(1)&0.9586&0.7092&1.0000&0.8266&0.9729&0.7916&1.0000&0.8835\\
         \hline
         IVGD(2)&0.9904&0.9123&1.0000&0.9528&0.9881&0.8967&1.0000&0.9455\\
         \hline
         IVGD(3)&0.9646&0.7394&1.0000&0.8473&0.9631&0.7358&1.0000&0.8475\\
         \hline
         IVGD&\textbf{0.9980}&\textbf{0.9802}&\textbf{1.0000}&\textbf{0.9899}&\textbf{0.9902}&\textbf{0.9133}&\textbf{1.0000}&\textbf{0.9546}\\
         \hline\hline
    \end{tabular}
    \caption{Ablation studies on simulations of six test datasets: all components in our model contribute to the outstanding performance.}
\label{tab:alabtion}
\vspace{-1.1cm}
\end{table}
\\ \indent Aside from simulations, we also evaluate our proposed IVGD on two real-world datasets, as shown in Figure \ref{fig:ROC}. X-axis and Y-axis represent the true positive rate and the false positive rate, respectively. Similarly as Table \ref{tab: synthetic performance}, our proposed IVGD also outperforms others significantly on the ROC curves: all comparison methods are surrounded by our proposed IVGD.  Specifically, the AUCs of our proposed IVGD are above $0.6$ on the Memetracker and the Digg datasets, whereas these of all other methods are either around $0.5$ or below $0.5$. The LPSI outperforms the GCNSI by about $20\%$ on the Memetracker, while it performs worse on the Digg by $10\%$.
\subsection{Ablation Studies}
\indent One important question to our proposed IVGD is whether all components in our model are necessary. To investigate this, we test our performance on six datasets, when some components are removed. For the sake of simplicity, IVGD(1) means that the invertible graph residual net was removed, IVGD(2) means that the error compensation module was removed, and IVGD(3) means that validity-aware layers were removed. The results on six datasets are demonstrated in Table \ref{tab:alabtion}. Overall, the performance will degrade if any component of our proposed IVGD is removed. Without the invertible graph residual net, the performance on the Cora-ML dataset drops significantly from $99.7\%$ to $87\%$ in the ACC, and the PR has declined by more than $50\%$. The FS on the Power Grid has demonstrated a $7\%$ drop due to the same reason. Validity-aware layers provide a giant leap on FS when we compare IVGD(3) with IVGD.  The performance of the FS has been enhanced by $5\%-14\%$. For example, the FS on the Power Grid dataset soars from $0.848$ to $0.955$. The same pattern is applied to the Karate and the Dolphins datasets. This suggests that integrating validity patterns into the model significantly improves model performance. The error compensation module plays a less important role than the invertible graph residual net and validity-aware layers. Specifically, the performance drop without it is slim compared with other components. For example, the drop of the ACC on the Jazz dataset is less than $1\%$. Moreover, the ACC on the Karate dataset even increases slightly. But the effect of the error compensation module is still positive overall.
\begin{figure}
\begin{minipage}{0.32\linewidth}
   \centering
    \includegraphics[width=\linewidth]{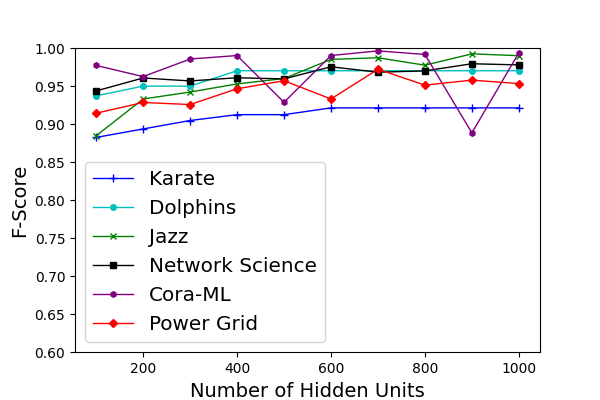}
    \centerline{(a). Hidden units VS FS.}
\end{minipage}
    \hfill
\begin{minipage}{0.32\linewidth}
   \centering
    \includegraphics[width=\linewidth]{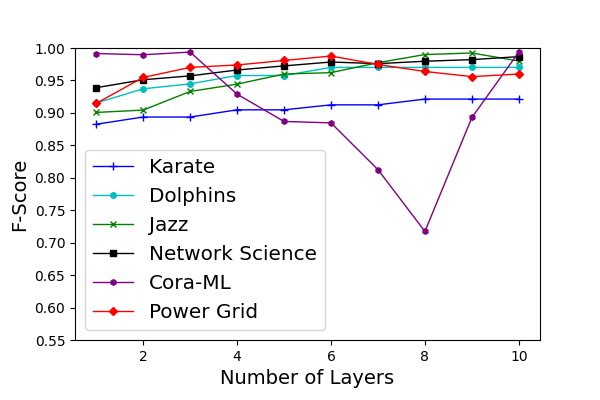}
    \centerline{(b). Layers VS FS.}
\end{minipage}    
\hfill
\begin{minipage}{0.32\linewidth}
  \centering
    \includegraphics[width=\linewidth]{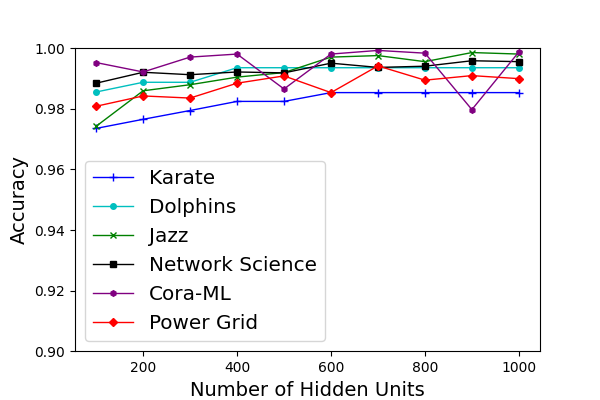}
    \centerline{(c). Hidden units VS ACC.}
\end{minipage}
\vfill
\begin{minipage}{0.32\linewidth}
  \centering
    \includegraphics[width=\linewidth]{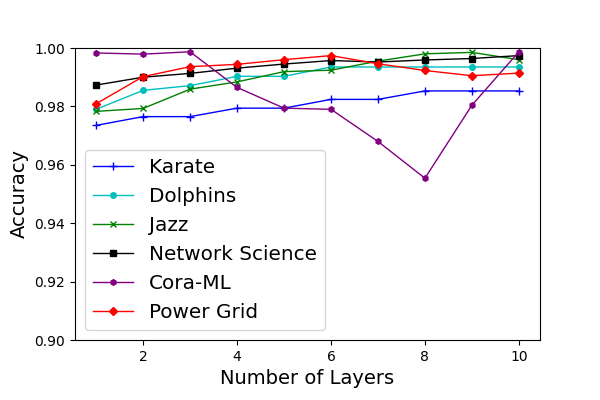}  
         \centerline{(d). Layers VS ACC.}
\end{minipage}
\hfill
\begin{minipage}{0.32\linewidth}
  \centering
    \includegraphics[width=\linewidth]{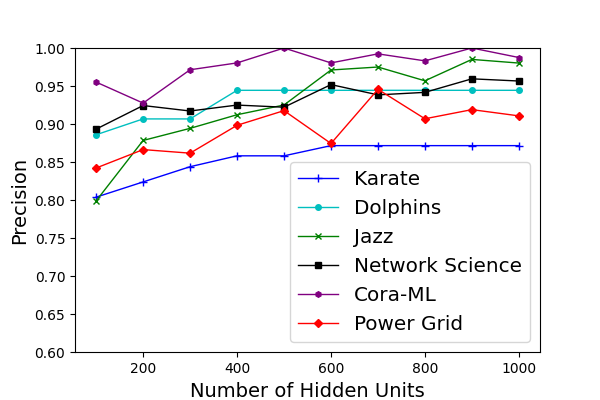}
    \centerline{(e). Hidden units VS PR.}
\end{minipage}
\hfill
\begin{minipage}{0.32\linewidth}
  \centering
    \includegraphics[width=\linewidth]{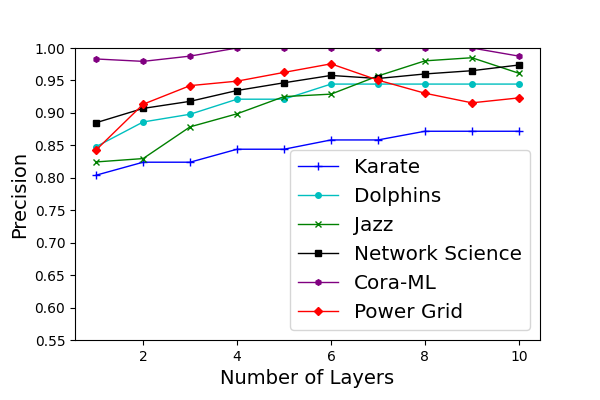}  
        \centerline{(f). Layers VS PR.}
\end{minipage}
\vspace{-0.3cm}
\caption{The impacts of two factors on the FS, ACC and PR: more hidden units and layers lead to better performance.}
\vspace{-0.3cm}
\label{fig:impact}
\end{figure}
    \begin{figure*}[!t]
		\subfloat[LPSI]{\label{fig: lpsi_karate}
			\includegraphics[width=0.19\textwidth]{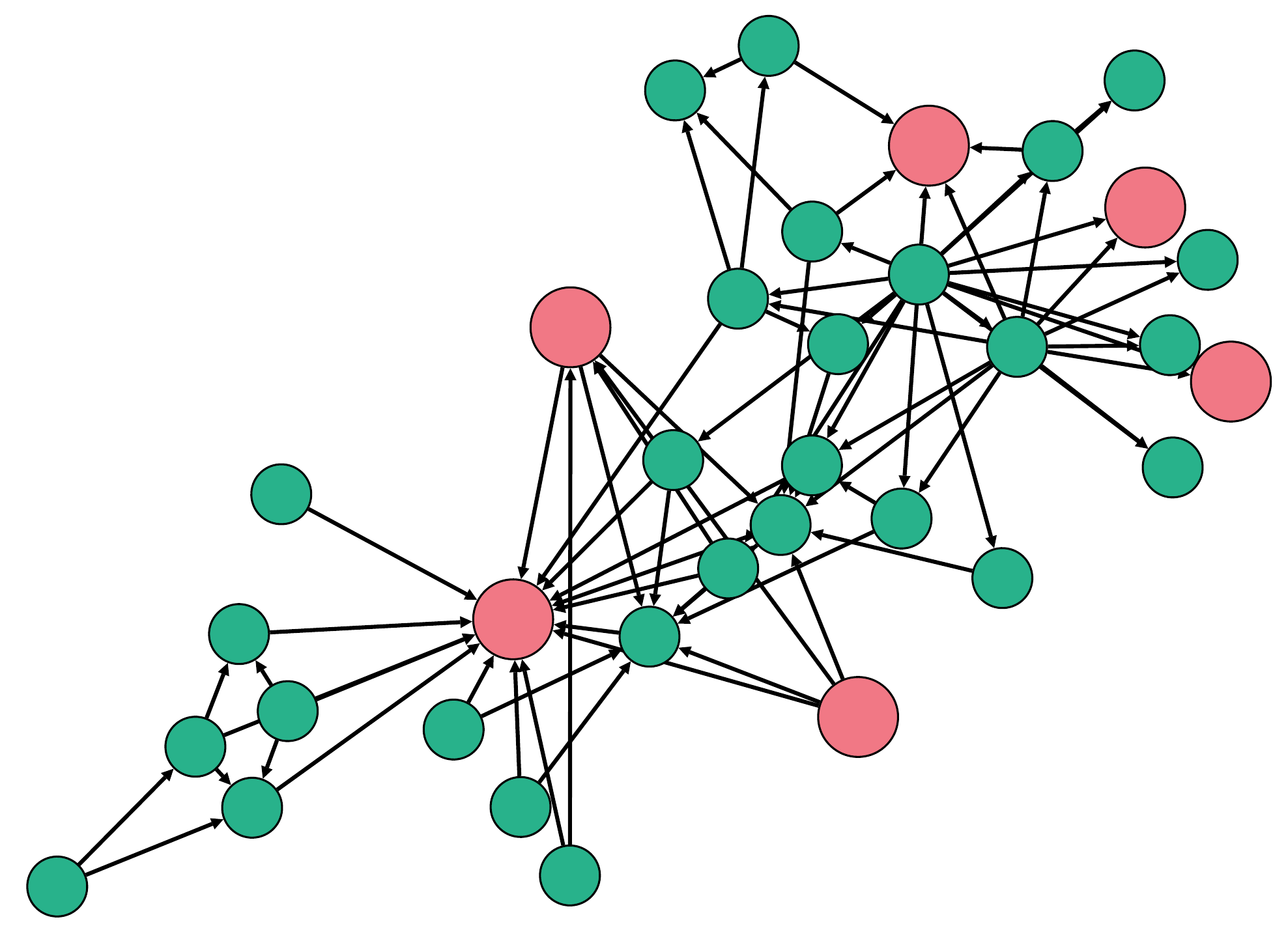}}
		\subfloat[NetSleuth]{\label{fig: NetSleuth_karate}
			\includegraphics[width=0.19\textwidth]{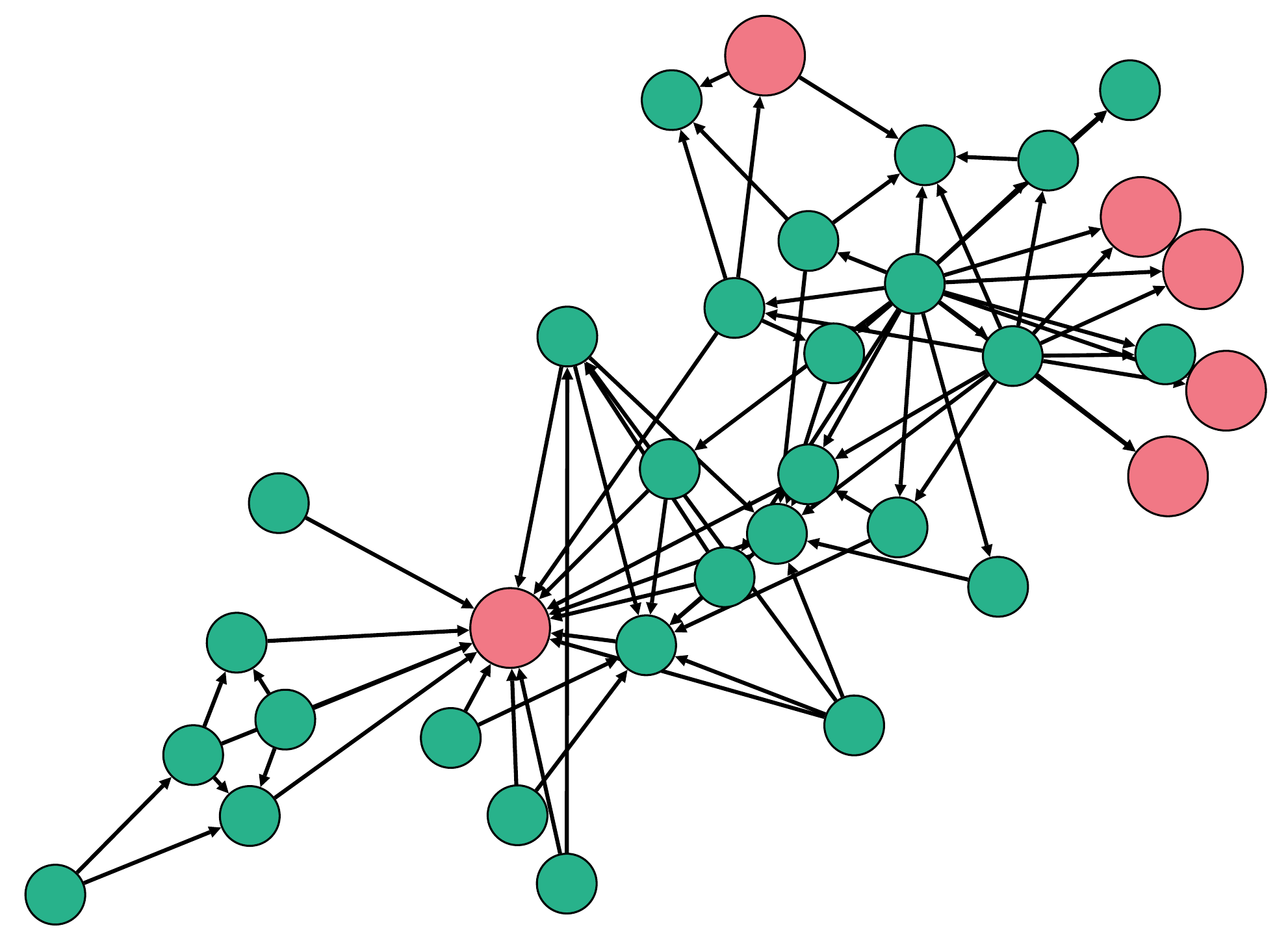}}
		\subfloat[GCNSI]{\label{fig: gcnsi_karate}
			\includegraphics[width=0.19\textwidth]{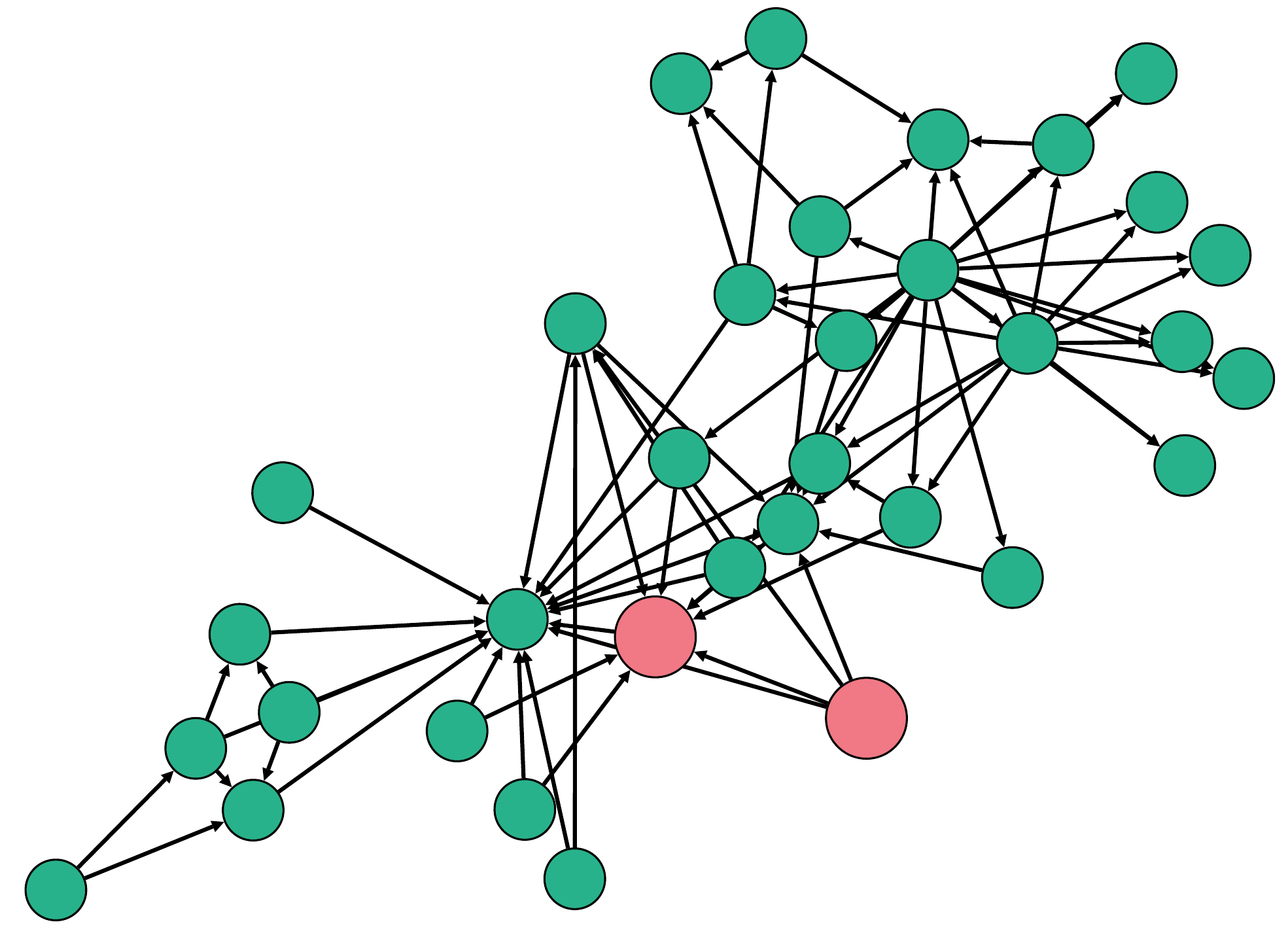}}
		\subfloat[\textbf{IVGD}]{\label{fig: ivgd_karate}
			\includegraphics[width=0.19\textwidth]{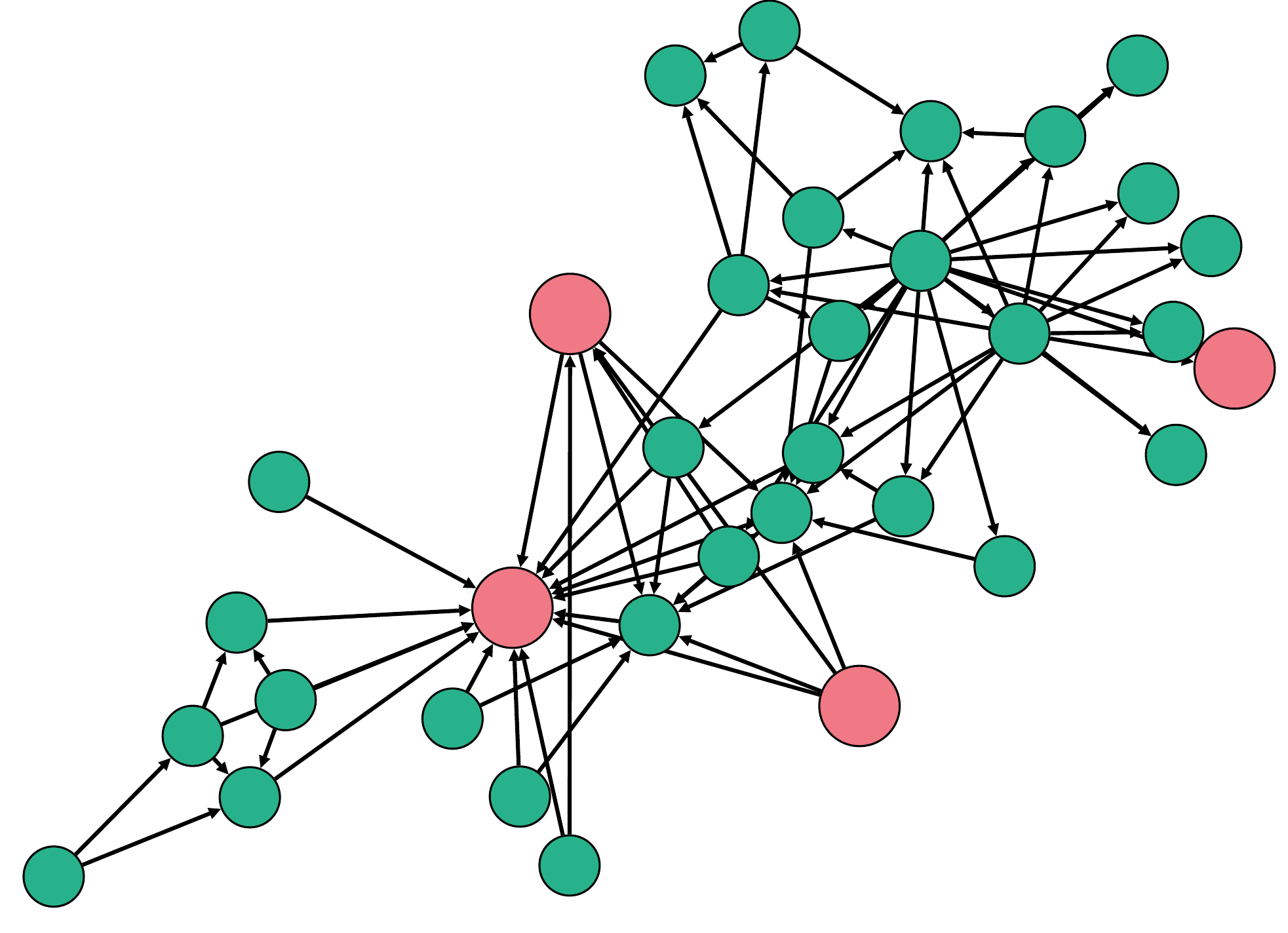}}
		\subfloat[True Sources]{\label{fig: true_karate}
			\includegraphics[width=0.19\textwidth]{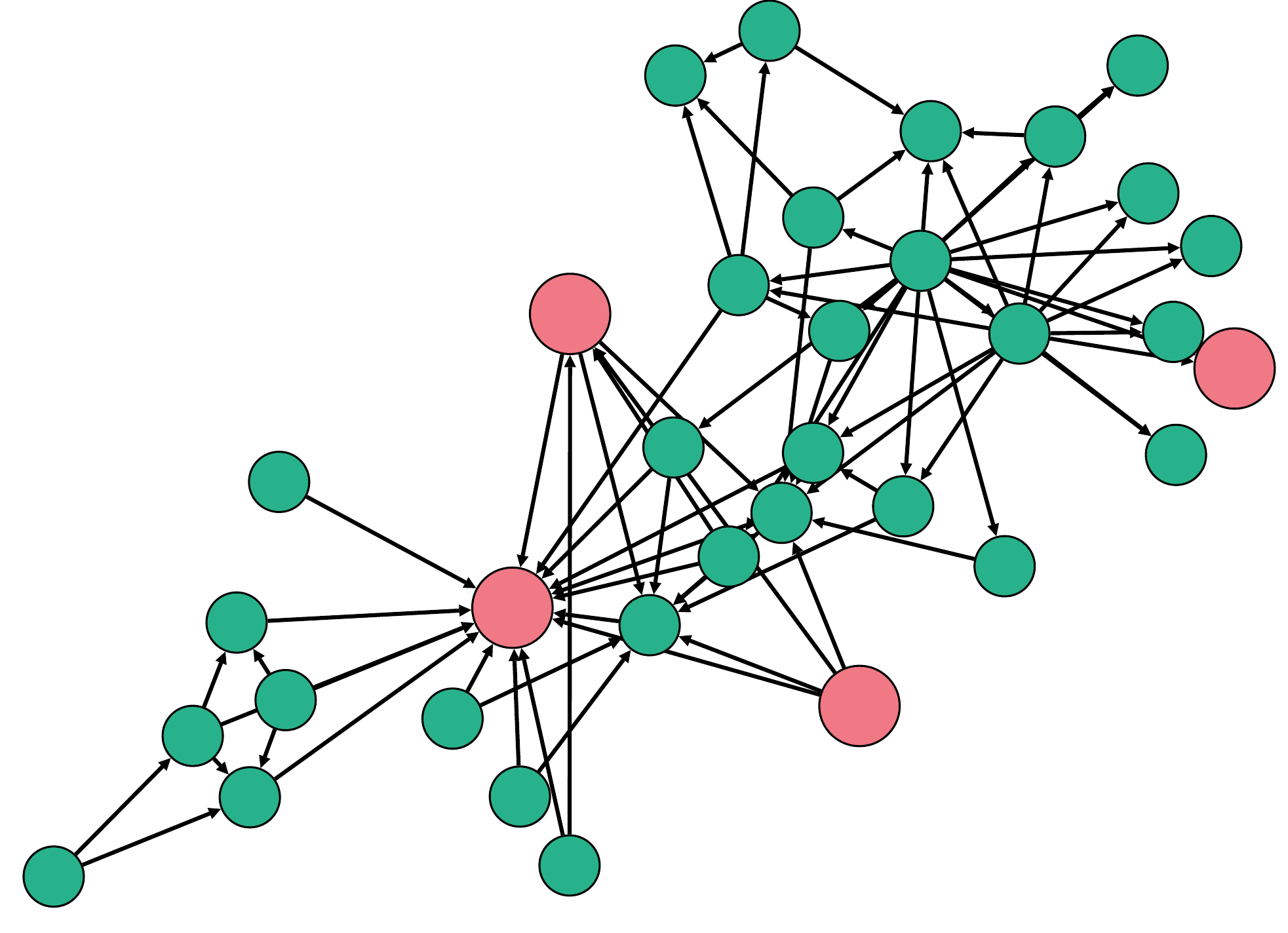}}\\\vspace{-4mm}
	\subfloat[LPSI]{\label{fig: lpsi_dolphins}
			\includegraphics[width=0.19\textwidth]{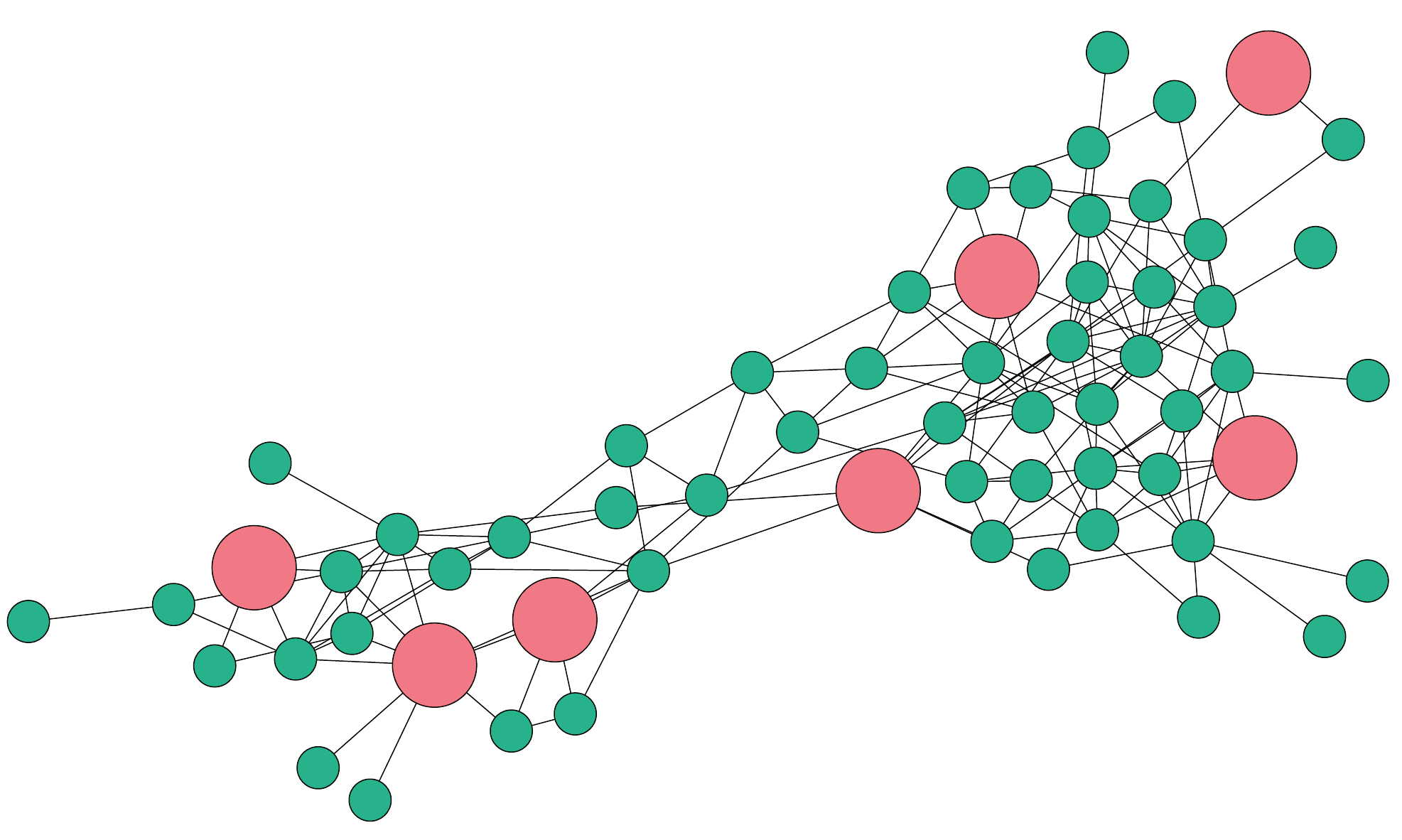}}
		\subfloat[NetSleuth]{\label{fig: NetSleuth_dolphins}
			\includegraphics[width=0.19\textwidth]{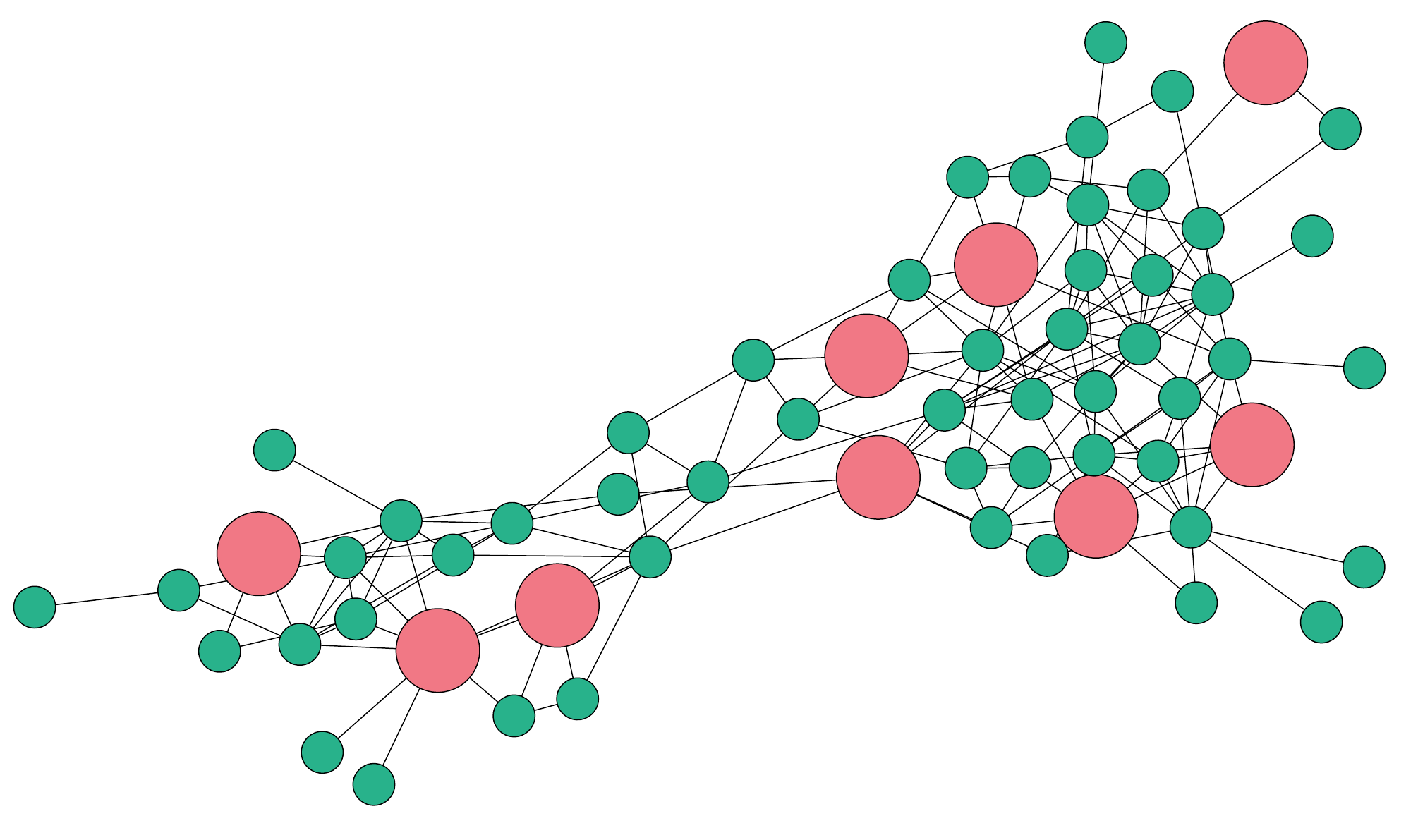}}
		\subfloat[GCNSI]{\label{fig: gcnsi_dolphins}
			\includegraphics[width=0.19\textwidth]{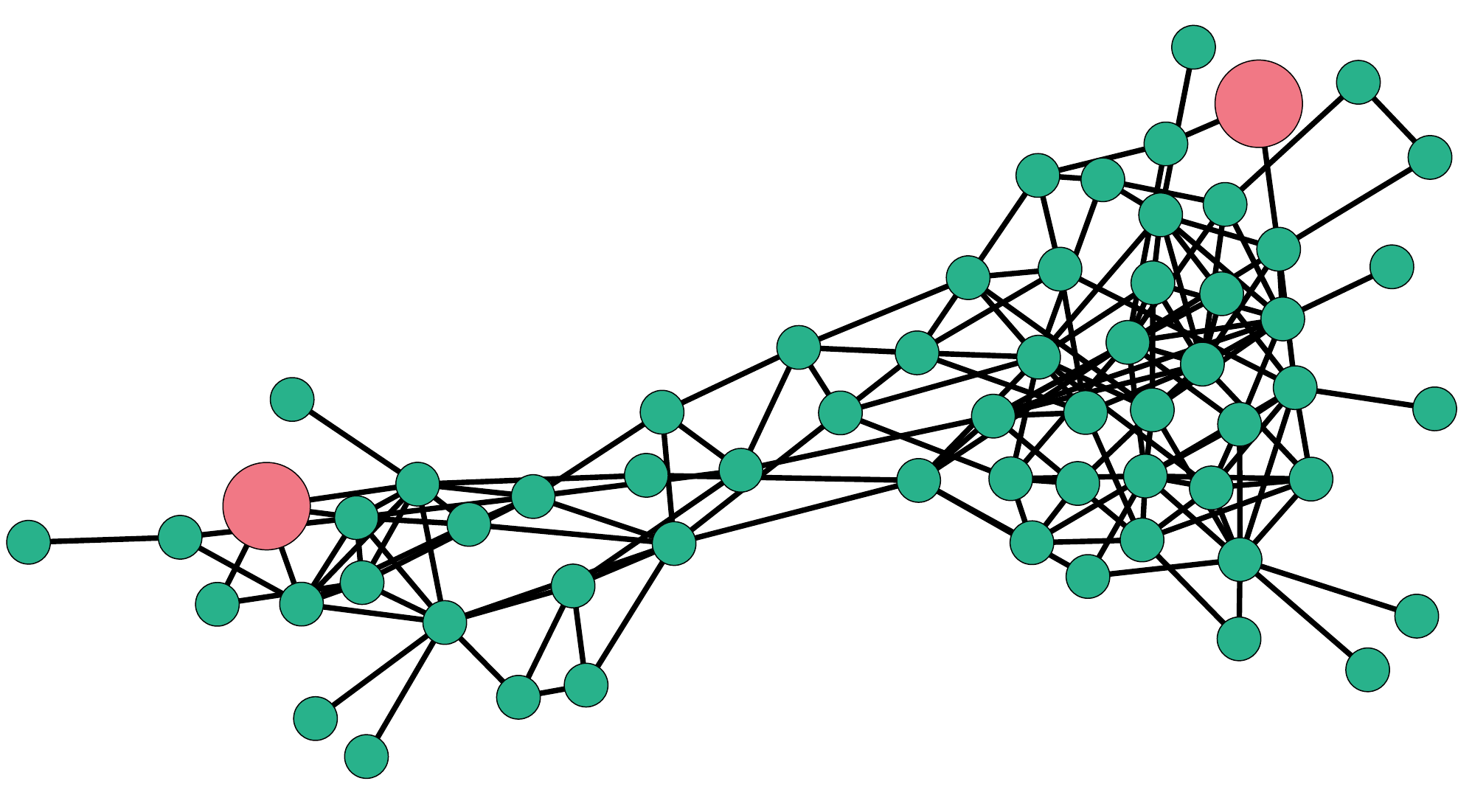}}
		\subfloat[\textbf{IVGD}]{\label{fig: ivgd_dolphins}
			\includegraphics[width=0.19\textwidth]{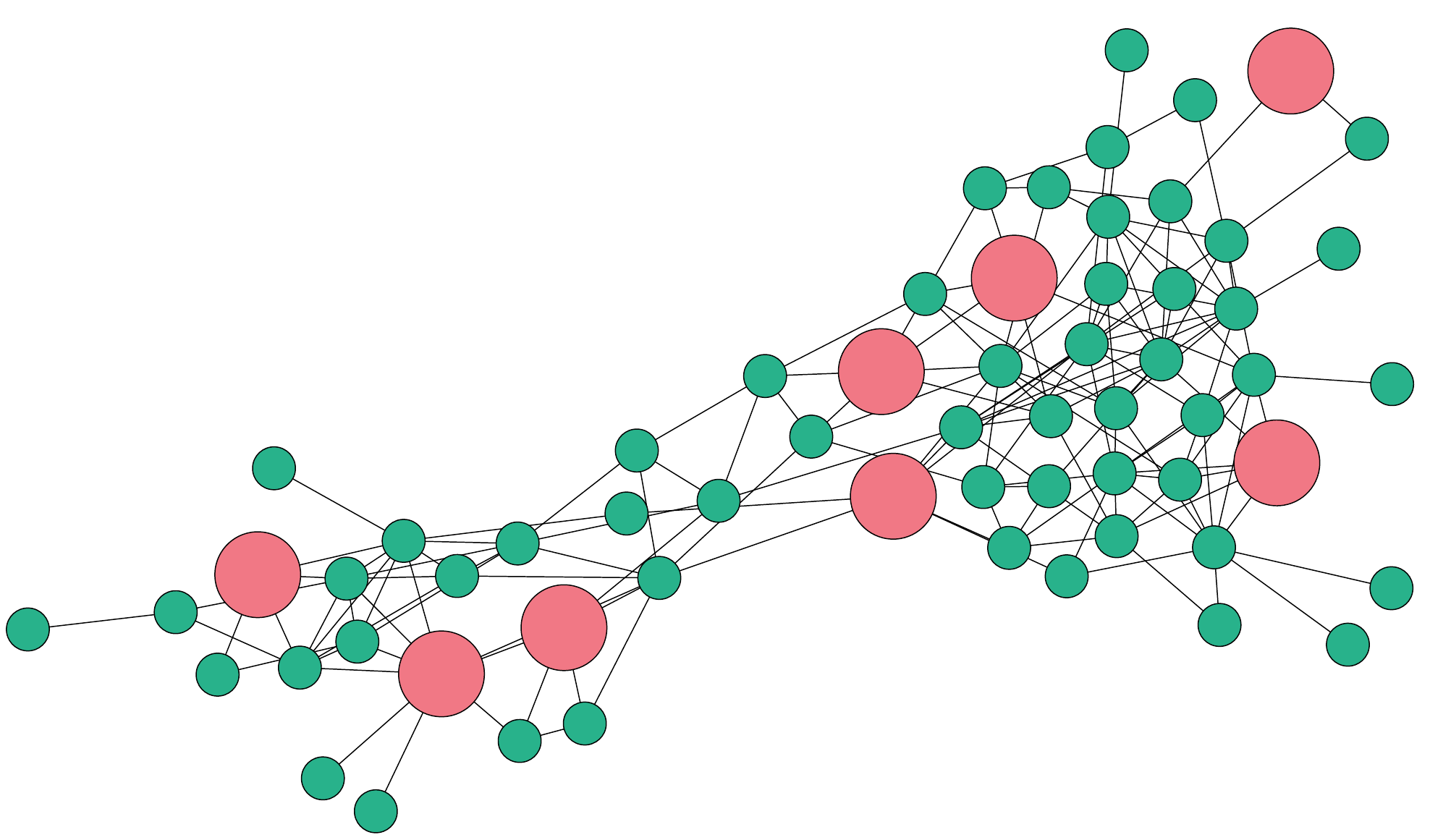}}
		\subfloat[True Sources]{\label{fig: true_dolphins}
			\includegraphics[width=0.19\textwidth]{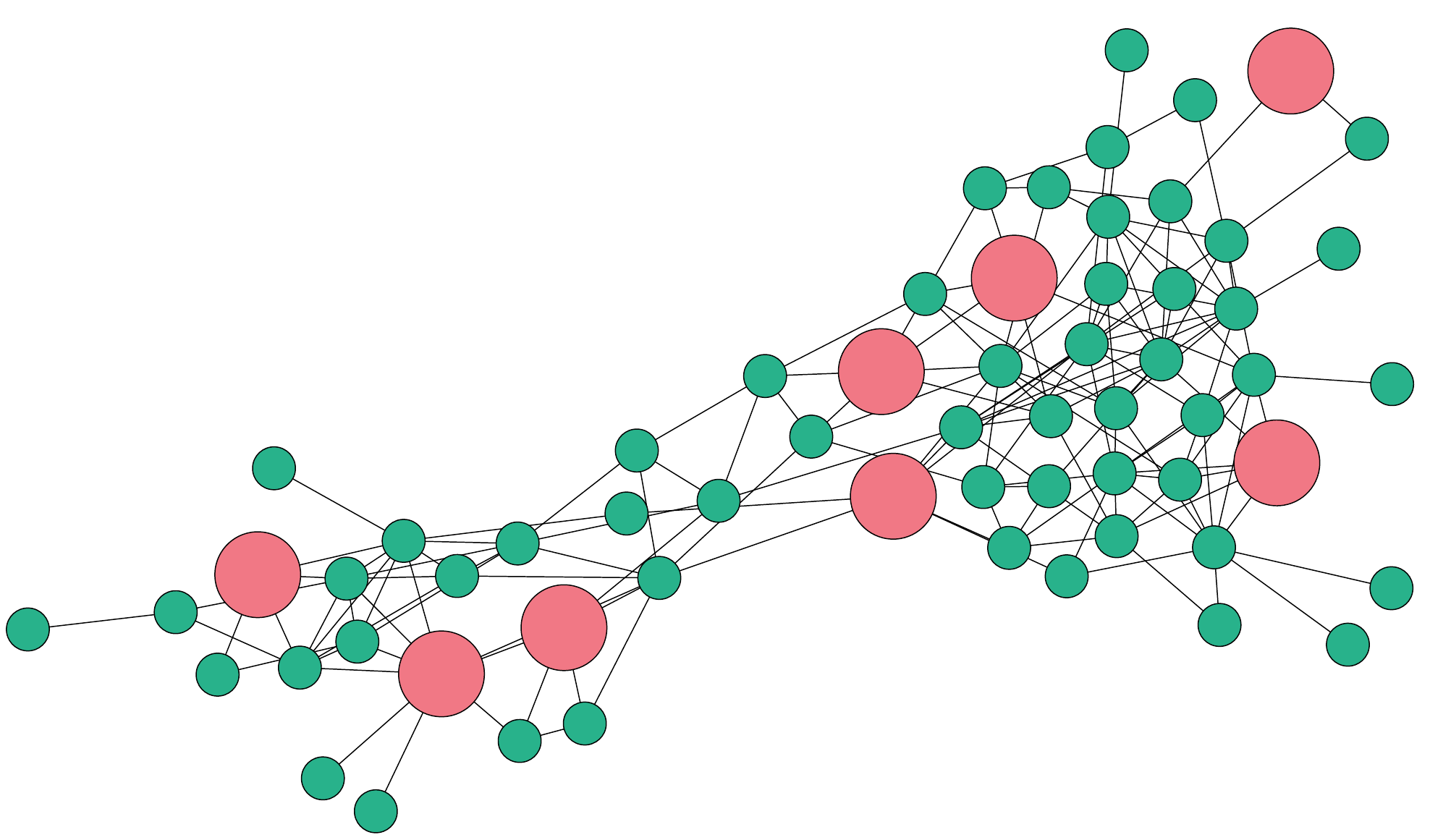}}\\\vspace{-3mm}
		\caption{ Visualizations of two datasets for all methods and true source patterns. Figures \ref{fig: lpsi_karate} - \ref{fig: true_karate} are visualizations of Karate, and Figures \ref{fig: lpsi_dolphins} - \ref{fig: true_dolphins} are visualizations of Dolphins. Sources nodes and other nodes are marked with red and green, respectively.}
		\label{fig:visualization}
	\end{figure*}

\subsection{Sensitivity Analysis}
\indent Next, it is crucial to investigate how parameter settings affect performance.  In this section, we explore two factors: the number of hidden units in the compensation module and the number of validity-aware layers. The number of epochs was set to 10. For the hidden units, we changed the number from $100$ to $1,000$; for the layers, the number ranged from 1 to 10. The impacts on FS, ACC, and PR are shown in Figure \ref{fig:impact}. Overall, the performance increases smoothly with the increase of hidden units and layers. For example, the FS on the Network Science dataset increases by $2\%$ when hidden units are changed from $100$ to $1,000$; it climbs by $4\%$ when the layers ranged from $1$ to $10$.  However, there is an exception: the FS fluctuates on the Cora-ML dataset. The amplitudes are about $10\%$ and $30\%$ for hidden units and layers, respectively. Despite the fluctuation, the lowest FSes achieved by 800 hidden units and seven layers are still better than all comparison methods, as shown in Table \ref{tab: synthetic performance}.
\subsection{Scalability Analysis}
\begin{table}[]
\scriptsize
\centering
    \begin{tabular}{c|c|c|c|c|c|c|c}
        \hline\hline
         Method&Karate&Dolphins&Jazz&\tabincell{c}{Network \\Science}&Cora-ML&\tabincell{c}{Power\\ Grid}&Deezer \\\hline
         LPSI&\textbf{0.26}&\textbf{0.27}&\textbf{0.76}&52.83&240.88&899.45&94541.13\\\hline
         NetSleuth&0.33&0.48&1.95&32.96&645.04&1260.68&114425.32\\\hline
         GCNSI&1.55&34.62&125.59&283.62&776.70&2324.53&174923.31\\\hline
         IVGD(ours) &5.37&6.45&9.78&\textbf{23.95}&\textbf{92.68}&\textbf{177.32}&\textbf{46832}\\\hline
    \hline
        \end{tabular}
    \caption{The running time (seconds) on simulations of seven datasets: our proposed IVGD runs the most efficiently on the large-scale networks.}
    \label{tab:running time}
    \vspace{-1cm}
\end{table}
\indent To test the efficiency and scalability of our proposed IVGD, we compared the running time of IVGD with all comparison methods on seven datasets, which is shown in Table \ref{tab:running time}. The best running time is highlighted in bold. In general, we proposed IVGD runs the most efficiently on large-scale networks such as Deezer, which consists of about $5,0000$  nodes. Specifically, it consumes about half a day to finish training, while all comparison methods at least double. The same trend holds in other large networks such as Cora-ML and Power Grid. The LPSI takes the least time on small networks such as Karate and Dolphins. The GCNSI is the slowest method on most datasets. For example, it consumes around 2 minutes on the small Jazz dataset, while all other methods take less than 10 seconds. It takes 2 days on the Deezer dataset, whereas the LPSI only requires half of that time.
\subsection{Invertibility Analysis}
\indent For the invertibility, one may raise a concern on whether it impairs the performance of graph diffusion models. To investigate this question, we compare the performance of the GNN model $\theta$ and the proposed Invertible Graph Residual Net (IGRN) $P$ on eight datasets. The DeepIS \cite{xia2021DeepIS} was chosen as the GNN model, and the IGRN was implemented based on the DeepIS. The Mean Square Error (MSE) and the Mean Absolute Error (MAE) were used to assess the performance. Table \ref{tab:invertibility} illustrates the performance of two graph diffusion models. In summary, they perform similarly on two metrics across different datasets. Specifically,  the GNN achieves a better performance on the Karate, Dolphins, Cora-ML, and Power Grid datasets, whereas the IGRN stands out on the Dolphins, Network Science, Memetracker, and Digg datasets. The largest gap comes from the MAE on the Karate dataset, where the GNN outperforms IGRN by $0.02$.
\begin{table}[]
\scriptsize
    \centering
    \begin{tabular}{c|c|c|c|c|c|c|c|c}
    \hline\hline
         &\multicolumn{2}{c|}{Karate}&\multicolumn{2}{c|}{Dolphins}&\multicolumn{2}{c|}{Jazz}&\multicolumn{2}{c}{Network Science}\\
         \hline
         &MSE&MAE&MSE&MAE&MSE&MAE&MSE&MAE\\\hline
         GNN&\textbf{0.0287}&\textbf{0.0773}&0.0270&0.1063&0.0575&\textbf{0.1731}&0.0199&0.0743\\
        \hline IGRN&0.0311&0.1010&\textbf{0.0258}&\textbf{0.0794}&\textbf{0.0514}&0.1867&\textbf{0.0156}&\textbf{0.0643}\\
        \hline &\multicolumn{2}{c|}{Cora-ML}&\multicolumn{2}{c|}{Power Grid}&\multicolumn{2}{c|}{Memetracker}&\multicolumn{2}{c}{Digg}\\
         \hline
         &MSE&MAE&MSE&MAE&MSE&MAE&MSE&MAE\\\hline
         GNN&\textbf{0.0017}&\textbf{0.0282}&\textbf{0.0221}&\textbf{0.0633}&0.0273&0.0322&0.0198&0.0265\\
        \hline IGRN&0.0041&0.0488&0.0247&0.0758&\textbf{0.0236}&\textbf{0.0311}&\textbf{0.0165}&\textbf{0.0203}\\         \hline\hline
    \end{tabular}
    \caption{The effect of invertibility on graph diffusion models: it plays a negligible role.}
    \label{tab:invertibility}
    \vspace{-1.0cm}
\end{table}
\subsection{Visualization}
\indent Finally, we demonstrate the effectiveness of our proposed IVGD by visualizing two small datasets Karate and Dolphin in Figure \ref{fig:visualization}. Red nodes and green nodes represent source nodes and other nodes, respectively. Specifically, our proposed IVGD perfectly predicts all sources on two datasets, and the LPSI and the NetSleuth also achieve similar source patterns as the ground truth: they only misclassify several source nodes. The GCNSI, however, misses most of the source nodes. This is because it suffers from class imbalance problems, and tends to classify none of all nodes as a source node. This is consistent with test performance shown in Table \ref{tab: synthetic performance}.
\section{Conclusion}
\indent Graph source localization is an important yet challenging problem in graph mining. In this paper, we propose a novel Invertible Validity-aware Graph Diffusion (IVGD) to address this problem from the perspective of the inverse problem. Firstly, we propose an invertible graph residual net by restricting its Lipschitz constant with guarantees. Moreover, we present an error compensation module to reduce the introduced errors with skip connection. Finally, we utilize the unrolled optimization technique to impose validity constraints on the model. A linearization technique is used to transform problems into solvable forms. We provide the convergence of the proposed IVGD to a feasible solution. Extensive experiments on nine real-world datasets have demonstrated the effectiveness, robustness, and efficiency of our proposed IVGD.
\section*{Acknowledgement}
We would like to acknowledge our collaborator, Hongyi Li from Xidian University, for her precious suggestions on the mathematical proofs. Furthermore, this work was supported by the National Science Foundation (NSF) Grant No. 1755850, No. 1841520, No. 2007716, No. 2007976, No. 1942594, No. 1907805, a Jeffress Memorial Trust Award, Amazon Research Award, NVIDIA GPU Grant, and Design Knowledge Company (subcontract No: 10827.002.120.04).
\bibliographystyle{plain}
\bibliography{reference}
\newpage
\onecolumn
\appendix
\small
\textbf{Appendix}\\
\section{The proof of Lemma \ref{lemma:lipschitz constant}}
\label{sec:proof lipschitz constant}
\begin{proof}
On one hand, for any $x^{'},x^{''}$, we have
\begin{align*}
    &\Vert P(x^{''})-P(x^{'})\Vert =\Vert G(F_W(x^{''}))-G(F_W(x^{'}))\Vert=\Vert \frac{g(F_W(x^{''}))+F_W(x^{''})}{2}-\frac{g(F_W(x^{'}))+F_W(x^{'})}{2}\Vert\\&\leq\frac{1}{2}\Vert g(F_W(x^{''}))-g(F_W(x^{'}))\Vert+\frac{1}{2}\Vert F_W(x^{''})-F_W(x^{'})\Vert \ (\text{triangle inequality})\\&\leq \frac{L_g+1}{2}\Vert F_W(x^{''})-F_W(x^{'})\Vert \ (\text{Lipschitz constant of $g$})\\&=\frac{L_g+1}{4}\Vert f_W(x^{''})+x^{''}-f_W(x^{'})-x^{'}\Vert\\&\leq \frac{L_g+1}{4}(\Vert f_W(x^{''})-f_W(x^{'})\Vert+\Vert x^{''}-x^{'}\Vert) \ \text{(triangle inequality)}\\&\leq \frac{(L_f+1)(L_g+1)}{4}\Vert x^{''}-x^{'}\Vert \ (\text{Lipschitz constant of $f_w$}).
\end{align*}
This suggests that $L_P\leq \frac{(L_f+1)(L_g+1)}{4}$. On the other hand,
\begin{align*}
    &\Vert P(x^{''})-P(x^{'})\Vert=\Vert \frac{g(F_W(x^{''}))+F_W(x^{''})}{2}-\frac{g(F_W(x^{'}))+F_W(x^{'})}{2}\Vert\\&\geq \frac{1}{2}\Vert F_W(x^{''})-F_W(x^{'})\Vert-\frac{1}{2}\Vert g(F_W(x^{''}))-g(F_W(x^{'}))\Vert \ (\text{triangle inequality})\\&\geq \frac{1-L_g}{2}\Vert F_W(x^{''})-F_W(x^{'})\Vert \ (\text{Lipschitz constant of $g$})\\&=\frac{1-L_g}{4}\Vert f_W(x^{''})+x^{''}-f_W(x^{'})-x^{'}\Vert\\&\geq \frac{1-L_g}{4}(\Vert x^{''}-x^{'}\Vert-\Vert f_W(x^{''})-f_W(x^{'})\Vert) \ \text{(triangle inequality)}\\&\geq \frac{(1-L_f)(1-L_g)}{4}\Vert x^{''}-x^{'}\Vert \ (\text{Lipschitz constant of $f_w$}).
\end{align*}
Let $y^{'}=P(x^{'})$, $y^{''}=P(x^{''})$, so $x^{'}=P^{-1}(y^{'})$, $x^{''}=P^{-1}(y^{''})$. This leads to 
$\Vert P^{-1}(y^{''})-P^{-1}(y^{'})\Vert\leq\frac{4}{(1-L_f)(1-L_g)}\Vert y^{''}-y^{'}\Vert.
$
This suggests that $L_{P^{-1}}\leq \frac{4}{(1-L_f)(1-L_g)}$, and it concludes the proof.
\end{proof}
\section{The Proof of Lemma \ref{lemma: preliminary}}
\label{sec:proof preliminary}
\begin{proof}
The optimality condition of $x^{k+1}$ leads to
$
    \nabla J^k(x^{k+1})+A^T\lambda^k+\rho^kA^T(Ax^k-b)+\alpha^k(x^{k+1}-x^k)=0.
    $
    We plug in $\lambda^{k+1}=\lambda^{k}+\rho^k(Ax^{k+1}-b)$  and arrange to obtain
    \begin{align}
     \nabla J^k(x^{k+1})+A^T\lambda^{k+1}-\rho^{k}A^TA(x^{k+1}-x^{k})+\alpha^k(x^{k+1}-x^k)=0. \label{eq: x optimality}  
    \end{align}
    That is 
    $\nabla J^k(x^{k+1})=-A^T\lambda^{k+1}+\rho^{k}A^TA(x^{k+1}-x^{k})-\alpha^k(x^{k+1}-x^k).  
    $
     Also the optimality condition of $x^k_*$ results in $\nabla J^k(x^k_*)+A^T\lambda^k_*=0$, due to the convexity of $J^k(x)$, we have
     \begin{align}
         &J^k(x^{k+1})\geq J^k(x^k_*)+(-A^T\lambda^k_*)^T(x^{k+1}-x^k_*).\label{ineq: convex 1}\\
         &J^k(x^k_*)\geq J^k(x^{k+1})+(-A^T\lambda^{k+1}+\rho^{k}A^TA(x^{k+1}-x^{k})-\alpha^k(x^{k+1}-x^k) )^T(x^k_*-x^{k+1}).\label{ineq: convex 2}
     \end{align}
We sum Inequalities \eqref{ineq: convex 1} and \eqref{ineq: convex 2} to obtain
$
    (A^T\lambda^{k+1}-A^T\lambda^k_*-\rho^{k}A^TA(x^{k+1}-x^{k})+\alpha^k(x^{k+1}-x^k))^T(x^k_*-x^{k+1})\geq 0.
$
 After rearranging terms, we have
 \begin{align*}
     &(\lambda^{k+1}-\lambda^k_*)^TA(x^k_*-x^{k+1})+\alpha^k(x^k-x^{k+1})^T(x^{k+1}-x^k_*)\geq \rho^k (x^{k+1}-x^k)^TA^TA (x^k_*-x^{k+1}).
 \end{align*}
Using the facts that $Ax^{k+1}-Ax^k_*=(Ax^{k+1}-b)-(Ax^k_*-b)=Ax^{k+1}-b=\frac{1}{\rho^k}(\lambda^{k+1}-\lambda^{k})$, and $a^Tb=ab^T$($a$ and $b$ are two vectors), we have
\begin{align*}
\\&
   \frac{1}{\rho^k}(\lambda^{k}-\lambda^{k+1})^T(\lambda^{k+1}-\lambda^k_*) +\alpha^k(x^k-x^{k+1})^T(x^{k+1}-x^k_*)\geq (x^{k+1}-x^k)^TA^T(\lambda^k-\lambda^{k+1}).
\end{align*}
\end{proof}
\section{Proof of Theorem \ref{theo: decrease u}}
\label{sec:proof decrease u}
\begin{proof}
We denote $u^k=(\lambda^k,x^k)$, $u^{k+1}=(\lambda^{k+1},x^{k+1})$, and $u^k_*=(\lambda^k_*,x^k_*)$.
Using the notation defined in Equation \eqref{eq: notation}, we have
\begin{align*}
    \langle u^k-u^{k+1},u^{k+1}-u^k_*\rangle_{M^k}\geq (x^{k+1}-x^k)^TA^T(\lambda^k-\lambda^{k+1}).
\end{align*}
Because $u^{k+1}-u^k_{*}=u^{k+1}-u^k+u^k-u^{k}_*$, it follows that
\begin{align}
  \langle u^k-u^{k+1},u^{k}-u^k_*\rangle_{M^k}\geq \Vert u^{k+1}-u^{k}\Vert^2_{M^k}+ (x^{k+1}-x^k)^TA^T(\lambda^k-\lambda^{k+1}). \label{ineq: convex 3}  
\end{align}
It holds that
\begin{align*}
    &\Vert u^k-u^k_*\Vert^2_{M^k}-\Vert u^{k+1}-u^k_*\Vert^2_{M^k}\\&=2\langle u^k-u^{k+1},u^k-u^k_*\rangle_{M^k}-\Vert u^{k+1}-u^k\Vert^2_{M^k}\\&\geq \Vert u^{k+1}-u^{k}\Vert^2_{M^k}+ 2(x^{k+1}-x^k)^TA^T(\lambda^k-\lambda^{k+1}) (\text{Inequality \eqref{ineq: convex 3}})\\&=\frac{1}{\rho^k} \Vert x^{k+1}-x^k\Vert^2_2+\alpha^k \Vert \lambda^{k+1}-\lambda^{k}\Vert^2_2+2(x^{k+1}-x^k)^TA^T(\lambda^k-\lambda^{k+1}) \ (\text{the definition of $\Vert\bullet\Vert^2_{M^k}$})\\&=\frac{1}{\rho^k} \Vert x^{k+1}-x^k\Vert^2_2+\alpha^k \Vert \lambda^{k+1}-\lambda^{k}\Vert^2_2-2(A(x^{k+1}-x^k))^T(\lambda^{k+1}-\lambda^{k}).
\end{align*}
Because $\alpha^k-\rho^kr(A^TA)>0$, it holds that $\delta^k=\frac{\alpha^k-\rho^kr(A^TA)}{2\rho^kr(A^TA)}> 0$. Let $\eta^k=\frac{\alpha^k}{(1+\delta^k)}>0$
\begin{align*}
    &\frac{1}{\rho^k} \Vert x^{k+1}-x^k\Vert^2_2+\alpha^k \Vert \lambda^{k+1}-\lambda^{k}\Vert^2_2-2(A(x^{k+1}-x^k))^T(\lambda^{k+1}-\lambda^{k})\\&\geq \frac{1}{\rho^k} \Vert x^{k+1}-x^k\Vert^2_2+\alpha^k \Vert \lambda^{k+1}-\lambda^{k}\Vert^2_2-\frac{1}{\eta^k}\Vert A(x^{k+1}-x^k)\Vert^2_2-\eta^k\Vert \lambda^{k+1}-\lambda^{k}\Vert^2_2\\&(\text{$2(A(x^{k+1}-x^k))^T(\lambda^{k+1}-\lambda^{k})\leq \frac{1}{\eta^k}\Vert A(x^{k+1}-x^k)\Vert^2_2+\eta^k\Vert \lambda^{k+1}-\lambda^{k}\Vert^2_2$})\\&\geq \frac{1}{\rho^k} \Vert x^{k+1}-x^k\Vert^2_2+\alpha^k \Vert \lambda^{k+1}-\lambda^{k}\Vert^2_2-\frac{r(A^TA)}{\eta^k}\Vert x^{k+1}-x^k\Vert^2_2-\eta^k\Vert \lambda^{k+1}-\lambda^{k}\Vert^2_2\\&(\Vert A(x^{k+1}-x^k)\Vert^2_2\leq r(A^TA)\Vert x^{k+1}-x^k\Vert^2_2  \ \text{where $r(A^TA)$ is the spectral radius of $A^TA$})\\&=(\frac{1}{\rho^k}-\frac{r(A^TA)}{\eta^k})\Vert x^{k+1}-x^k\Vert^2_2+(\alpha^k-\eta^k)\Vert \lambda^{k+1}-\lambda^k\Vert^2_2\\&=\frac{\delta^k r(A^TA)}{\alpha^k}\Vert x^{k+1}-x^k\Vert^2_2+\frac{\alpha^k\delta^k}{1+\delta^k}\Vert \lambda^{k+1}-\lambda^k\Vert^2_2\\&\geq \mu^k\Vert u^{k+1}-u^k\Vert^2_{M^k}.
\end{align*}
where $\mu^k=\min(\frac{\rho^k\delta^k r(A^TA)}{\alpha^k},\frac{\delta^k}{1+\delta^k})$
Because $(C^k,\rho^k,\tau^k,\alpha^k)$ guarantees that $\Vert u^{k+1}-u^{k+1}_{*}\Vert^2_{M^{k+1}}\leq \Vert u^{k+1}-u^{k}_{*}\Vert^2_{M^{k}}$, then we have
\begin{align}
    \Vert u^k-u^k_*\Vert^2_{M^k}\geq\Vert u^{k+1}-u^k_*\Vert^2_{M^k}+ \mu^k\Vert u^{k+1}-u^k\Vert^2_{M^k}\geq \Vert u^{k+1}-u^{k+1}_*\Vert^2_{M^{k+1}}+ \mu^k\Vert u^{k+1}-u^k\Vert^2_{M^k}. \label{ineq: constractive}
\end{align}
(a). We sum Inequality \eqref{ineq: constractive} from $k=0$ to $k=K$ to obtain
\begin{align*}
 \sum\nolimits_{k=0}^K \mu^k\Vert u^{k+1}-u^k\Vert^2_{M^k}\leq \Vert u^0-u^0_*\Vert^2_{M^0}. 
\end{align*}
Let $K\rightarrow \infty$, we have $\lim\nolimits_{k\rightarrow\infty}\mu^k\Vert u^{k+1}-u^k\Vert^2_{M^k}=0$. Because $\mu^k>0$, we have $\lim\nolimits_{k\rightarrow\infty}\Vert u^{k+1}-u^k\Vert^2_{M^k}=0$.
\\(b). From Inequality \eqref{ineq: constractive}, $\Vert u^k-u^k_*\Vert^2_{M^k}$ is nonincreasing, and $\Vert u^k-u^k_*\Vert^2_{M^k}>0$ has a lower bound. Therefore, $\Vert u^k-u^k_*\Vert^2_{M^k}$ is convergent.\\
(c). From (a) we know that $\lim\nolimits_{k\rightarrow\infty}\Vert  u^{k+1}-u^k\Vert^2_{M^k}=0$. That is $\lim\nolimits_{k\rightarrow\infty}\frac{1}{\rho^k}\Vert x^{k+1}-x^k\Vert^2_2+\alpha^k\Vert \lambda^{k+1}-\lambda^{k}\Vert^2_2=0$. Because $\alpha^k\geq D_1>0$ and $\rho^k\leq D_4<\infty$, we have $\lim\nolimits_{k\rightarrow\infty} x^{k+1}-x^k=0$ and $\lim\nolimits_{k\rightarrow\infty} \lambda^{k+1}-\lambda^k=0$. Because  $\lambda^{k+1}=\lambda^{k}+\rho^k(Ax^{k+1}-b)$ and $\rho^k\geq D_3>0$, then $\lim\nolimits_{k\rightarrow\infty} Ax^{k+1}-b=0$.\\ We take the limit on both sides of Equation \eqref{eq: x optimality} to obtain
\begin{align*}
 \lim\nolimits_{k\rightarrow\infty} \nabla J^k(x^{k+1})+A^T\lambda^{k+1}-\rho^{k}A^TA(x^{k+1}-x^{k})+\alpha^k(x^{k+1}-x^k)=0.   
\end{align*}
 Because $\rho^k$ and $\alpha^k$ are bounded, and $\lim\nolimits_{k\rightarrow\infty} x^{k+1}-x^k=0$, we have $
 \lim\nolimits_{k\rightarrow\infty} \nabla J^k(x^{k+1})+A^T\lambda^{k+1}=0   
$. In summary, we  prove that $u^{k+1}$ is a feasible solution to Equation \eqref{eq:layer problem}.
\end{proof}
\section{Descriptions of All Datasets}
\label{sec:dataset}
\indent All datasets are outlined below:\\
\indent  1. Karate \cite{lusseau2003bottlenose}. Karate contains the social ties among the members of a university karate club.\\
\indent 2. Dolphins \cite{lusseau2003bottlenose}. Dolphins is a social network of bottlenose dolphins, where edges represent frequent associations between dolphins.\\
\indent 3. Jazz \cite{gleiser2003community}. Jazz is a collaboration network between Jazz musicians. Each edge represents that two musicians have played together in a band.\\
\indent 4. Network Science \cite{newman2006finding}. Network Science is a coauthorship network of scientists working on network theory and experiment. Each edge represents two scientists who have co-authored a paper.\\
\indent 5. Cora-ML \cite{mccallum2000automating}. Cora-ML is a portal network of computer science research papers crawled by machine learning techniques.\\
\indent 6. Power Grid \cite{watts1998collective}. Power Grid is a topology network of the Western States Power Grid of the United States.\\
\indent 7. Memetracker \cite{leskovec2009meme}. The Memetracker keeps track of frequently used phrases on news social media.\\
\indent 8. Digg \cite{hogg2012social}. Digg is a reply network of the social news.\\
\indent 9. Deezer \cite{rozemberczki2019gemsec}. Deezer is an online music streaming service.  We used all nodes from Hungary. \\

\end{document}